\definecolor{Darkblue}{rgb}{0,0,0.4}
\definecolor{Brown}{cmyk}{0,0.61,1.,0.60}
\definecolor{Purple}{cmyk}{0.45,0.86,0,0}
\newcommand{\etal}{{\em et al. \xspace}}
\numberwithin{equation}{section} 
\newtheorem{theorem}{Theorem}[section]
\newtheorem{lemma}[theorem]{Lemma}
\newtheorem{corollary}[theorem]{Corollary}
\newtheorem{claim}[theorem]{Claim}
\newtheorem{question}[theorem]{Question}
\newcommand{\namedref}[2]{\hyperref[#2]{#1~\ref*{#2}}}
\newcommand{\N}{\mathbb{N}}
\newcommand{\R}{\mathbb{R}}
\newcommand{\polylog}{{\rm polylog}}
\newcommand{\eps}{\epsilon}
\newcommand{\theoremref}[1]{\namedref{Thm.}{#1}}
\newcommand{\corollaryref}[1]{\namedref{Cor.}{#1}}
\title{Labelings vs. Embeddings:\\ On Distributed and Prioritized Representations of Distances
\thanks{A preliminary version of this paper appeared in the proceedings of SODA 2020 \cite{FGK20}. This version contains the proofs of Theorems \ref{thm:PlanarEmbeddingLB} and \ref{thm:PlanarPrioritzedLB}, omitted from the preliminary version.}}
\author[1]{Arnold Filtser\thanks{This research was supported by the Israel science foundation  (grant No. 1042/22).}}
\author[2]{Lee-Ad Gottlieb\thanks{Work partially supported by ISF grant 1602/19.}}
\author[3]{Robert Krauthgamer\thanks{Work partially supported by ONR Award N00014-18-1-2364, the Israel Science Foundation grant \#1086/18, and a Minerva Foundation grant.
  }}
\affil[1]{Bar Ilan University\\ \texttt{arnold.filtser@biu.ac.il}}
\affil[2]{Ariel University\\ \texttt{leead@ariel.ac.il}}
\affil[3]{Weizmann Institute of Science\\ \texttt{robert.krauthgamer@weizmann.ac.il}}
\date{}
\begin{document}
\maketitle
\setcounter{page}{1}
\begin{abstract}
We investigate for which metric spaces the performance of distance labeling and of $\ell_\infty$-embeddings differ, and how significant can this difference be. Recall that a distance labeling is a distributed representation of distances in a metric space $(X,d)$, where each point $x\in X$ is assigned a succinct label, such that the distance between any two points $x,y \in X$ can be approximated given only their labels. A highly structured special case is an embedding into $\ell_\infty$, where each point $x\in X$ is assigned a vector $f(x)$
such that $\|f(x)-f(y)\|_\infty$ is approximately $d(x,y)$. The performance of a distance labeling or an $\ell_\infty$-embedding is measured via its distortion and its label-size/dimension.

We also study the analogous question for the prioritized versions of these two measures. Here, a priority order $\pi=(x_1,\dots,x_n)$ of the point set $X$ is given, and higher-priority points should have shorter labels. Formally, a distance labeling has prioritized label-size $\alpha(\cdot)$ if every $x_j$ has label size at most $\alpha(j)$. Similarly, an embedding $f: X \to \ell_\infty$ has prioritized dimension $\alpha(\cdot)$ if $f(x_j)$ is non-zero only in the first $\alpha(j)$ coordinates. In addition, we compare these prioritized measures
to their classical (worst-case) versions. 

We answer these questions in several scenarios, uncovering a surprisingly diverse range of behaviors. First, in some cases labelings and embeddings have very similar worst-case performance, but in other cases there is a huge disparity. However in the prioritized setting, we most often find a strict separation between the performance of labelings and embeddings. And finally, when comparing the classical and prioritized settings, we find that the worst-case bound for label size often ``translates'' to a prioritized one, 
but also find a surprising exception to this rule.
\end{abstract}
{\bf Keywords:} Metric embedding, distance labeling, $\ell_\infty$.\\ \\
{\bf MSC classification codes:} 30L05, 46B85, 05C12, 05C78, 68R12


	\newpage
%
\section{Introduction}
It is often useful to succinctly represent the pairwise distances
in a metric space $(X,d)$ in a distributed manner.
A common model, called \emph{distance labeling},
assigns to each point $x\in X$ a label $l(x)$, 
such that some algorithm $\mathcal{A}$ (oblivious to $(X,d)$) can compute
the distance between any two points $x,y \in X$ given only their labels $l(x),l(y)$,
i.e., $\mathcal{A}\left(l(x),l(y)\right)=d(x,y)$. 
The goal is to construct a labeling whose label-size,
defined as $\max_{x\in X}|l(x)|$, is small. 
For general $n$-point metric spaces,
Gavoille, Peleg, P{\'{e}}rennes and Raz~\cite{GPPR04}
constructed a labeling scheme with 
label size of $O(n)$ words, and also proved this bound to be tight.%
\footnote{We measure size in words to avoid issues of bit representation.
  In the common scenario where distances are polynomially-bounded integers,
  every word has $O(\log n)$ bits, where $n=|X|$.
The bounds in \cite{GPPR04} are given in bits and are for unweighted graphs. 
Nevertheless, once we consider weighted graphs,
 $\Theta(n)$ words are sufficient and necessary for exact distance labeling, 
see \Cref{thm:GeneralGraphPriorirtyLabeling}.}

To obtain smaller label size, 
one often considers algorithms that approximate the distances. 
A distance labeling is said to have \emph{distortion} $t\ge 1$ if
\[
  \forall x,y\in X,
  \qquad
  d(x,y)\le \mathcal{A}\left(l(x),l(y)\right) \le t\cdot d(x,y).
\]
While the lower bound of \cite{GPPR04} holds even for distortion $t<3$, 
Thorup and Zwick~\cite{TZ05} constructed a labeling scheme with distortion 
$2t-1$ and label size $O(n^{1/t}\log n)$ for every integer $t\ge2$. 
These bounds are almost tight (assuming the Erd\H{o}s girth conjecture),
and demonstrate that for distortion $O(\log n)$, 
label size $O(\log n)$ is possible. \footnote{A much earlier technique to construct labeling scheme with distortion $O(\log n)$ is Bourgain's \cite{Bou85} embedding into $O(\log^2n)$-dimensional $\ell_2$, providing $O(\log^2n)$ label size.} 

From an algorithmic viewpoint, there is a significant advantage
to labels possessing additional structure, 
for example labels that are vectors in a normed space. 
This structure can lead to improved algorithms,
for example nearest neighbor search \cite{In98,BG19}.
A natural candidate for vector labels is the $\ell_\infty$ space, 
since every finite metric space embeds into it
isometrically (i.e., with no distortion). 
As such isometric embeddings require $\Omega(n)$ dimensions \cite{LLR95},
one may consider instead embeddings with small distortion. 
Formally, an embedding $f:X\rightarrow\ell_\infty$ is said to have
\emph{distortion} $t\ge1$ if 
\[
  \forall x,y\in X,
  \qquad
  d(x,y)\le \|f(x)-f(y)\|_\infty \le t\cdot d(x,y). 
\]
Matou{\v{s}}ek~\cite{Mat96} showed that for every integer $t\ge2$, 
every metric space embeds with distortion $2t-1$
into $\ell_\infty$ of dimension $O(n^{1/t}\cdot t\cdot \log n)$
(which again is almost tight assuming the Erd\H{o}s girth conjecture). 
For distortion $O(\log n)$,
Abraham \etal \cite{ABN11} later improved the dimension to $O(\log n)$.

In this paper, we take the perspective that $\ell_\infty$-embeddings
are a particular form of distance labelings,
and study the trade-offs these two models offer 
between distortion and dimension/label-size. 
While the inherent structure of $\ell_\infty$-embeddings makes them preferable, 
one may suspect that their additional structure precludes the tight trade-off
achieved using generic labelings.
Yet we have seen that for general metric spaces,
the performance of $\ell_\infty$-embeddings is essentially equivalent
to that of generic labelings.
This observation motivates us to consider more restricted input metrics, 
such as $\ell_p$ spaces, planar graph metrics, and trees. 
The central question we address is the following. 
\begin{question}\label{q1:LabVsEmb}
  In what settings are generic distance labelings
  more succinct than $\ell_\infty$-embeddings, 
  and how significant is the gap between them? 
\end{question}

\paragraph{Priorities.} 
Elkin, Filtser and Neiman~\cite{EFN18} introduced the problems of \emph{prioritized distortion} and \emph{prioritized dimension}; 
they posited that some points have higher importance or priority,
and it is desirable that these points achieve improved performance. 
Formally, given a priority ordering $\pi=\{x_1,\dots,x_n\}$ on the point set $X$, 
we say that embedding $f:X\rightarrow\ell_\infty$ possesses
\emph{prioritized contractive distortion} 
\footnote{In the original definition of prioritized distortion in \cite{EFN18},
the requirement of equation (\ref{eq:priorDistDef}) is replaced by the requirement 
$d(x_j,x_i)\le\|f(x_j)-f(x_i)\|_\infty\le \alpha(j)\cdot d(x_j,x_i)$.
We add the word \emph{contractive} to emphasize this difference. 
Prioritized contractive distortion is somewhat weaker in that
it does not imply scaling distortion (see \Cref{subsec:related}).
\label{foot:PrioritzedContractive}
}
$\alpha:\N\rightarrow\N$ (w.r.t.\ $\pi$) if 

\begin{eqnarray}
  \forall j<i,
  \qquad
	\frac{1}{\alpha(j)}\cdot d(x_j,x_i)\le \|f(x_j)-f(x_i)\|_\infty\le  d(x_j,x_i)~.\label{eq:priorDistDef}
\end{eqnarray}
Prioritized distortion is defined similarly for distance labeling. 
Furthermore, we say that a labeling scheme has prioritized label-size $\beta:\N\rightarrow\N$, if every $x_j$ has label length $|l(x_j)|\le \beta(j)$.
We say that embedding $f:X\rightarrow\ell_\infty$ has prioritized dimension $\beta$ if every $f(x_j)$ is non-zero only in the first $\beta(j)$ coordinates (i.e., $f_{i}(x_j)=0$ whenever $i>\beta(j)$). 
Here too $\ell_\infty$-embeddings are a more structured case of labelings,
and we again ask what are the possible trade-offs and how these two compare.
It is worth noting that the priority functions $\alpha,\beta$
are defined on all of $\N$ and apply when embedding every finite metric space;
in particular, they are not allowed to depend on $n=|X|$. 
Analogously to \Cref{q1:LabVsEmb},
we may also ask here about prioritized label size and dimension:

\begin{question}\label{q2:PriorVs}
	In what settings are distance labelings with prioritized label size
	more succinct than $\ell_\infty$ embeddings with prioritized dimension,
	and how significant is the gap between them? 
\end{question}

In many embedding results, the (worst-case) distortion
is a function of the size of the metric space $n=|X|$.
Elkin \etal \cite{EFN18} demonstrated a general phenomenon:
Often a worst-case distortion $\alpha(n)$ can be replaced
with a prioritized distortion $\tilde{O}(\alpha(j))$ using the same $\alpha$. \footnote{We use $\tilde{O}$ notation to suppress constants and logarithmic factors, that is $\tilde{O}(\alpha(j))=\alpha(j)\cdot\polylog(\alpha(j))$.}
For example, every finite metric space embeds into a distribution over trees
with prioritized expected distortion $O(\log j)$,
which extends the $O(\log n)$ distortion known from~\cite{FRT04}. 
Recently, Bartal \etal \cite{BFN19} showed that every finite metric space
embeds into $\ell_2$ with prioritized distortion $O(\log j)$,
which extends the $O(\log n)$ distortion known from~\cite{Bou85}.
In fact, we are not aware of any setting where it is impossible to generalize
a worst-case distortion guarantee to a prioritized guarantee. 
The final question we raise is the following. 
\begin{question}\label{q3:priorVersion}
  Does this analogy between worst-case and prioritized distortion 
  extend also to dimension and to label-size, 
  or perhaps their worst-case and prioritized versions exhibit a disparity? 
\end{question}

\subsection{Results: Old and New}
\begin{table}[p]
\centering
\begin{tabular}{|ll|l|lr|lr|}
	\hline \rowcolor[HTML]{FFCE93} 
	\multicolumn{7}{|c|}{\textbf{Worst-Case Label-Size/Dimension}}                                                                                              \\ \hline
	\rowcolor[HTML]{EFEFEF} 
	&& \textbf{Distortion} & \multicolumn{2}{|c|}{\textbf{Distance Labeling}}             & \multicolumn{2}{|c|}{\textbf{Embedding into $\ell_\infty$}}  \\ \hline
	1.&General Metric             & $<3$          & $\Theta(n)$ \tablefootnote{The upper bound is for distortion $1$ (i.e.\ isometric embedding).\label{foot:exact}} 
	&\cite{GPPR04}&$\Theta(n)$ &                              \cite{MatEmbedding13}\\ \hline
	\rowcolor[HTML]{EFEFEF} 
	2.&General Metric             & $O(\log n)$          & $O(\log n)$ 
	&\cite{TZ05}&$\Theta(\log n)$ &                              \cite{ABN11}\\ \hline		
	3.&$\ell_p$ for $p\in[1,2]$                  & $1+\eps$        & $O({\eps}^{-2}\log n)$                  &\theoremref{thm:L2toLinftyPriorUB}&$\Theta(n)$ \tablefootnote{Holds for $1+\eps<\sqrt{2}$ and $p\in[1,\infty]$.}  &                         \theoremref{thm:LBL2toLinfty}\\ 		\hline
	\rowcolor[HTML]{EFEFEF} 
	4.&Tree                & $1$        & $O(\log n )$                   & \cite{TZ01} &$\Theta(\log n )$   &                      \cite{LLR95}\\\hline
	5.&Planar                   & 1        & $\Theta(\sqrt{n})$  
	&\cite{GPPR04}&$\Theta(n)$&  \cite{LLR95} \\ \hline	
	\rowcolor[HTML]{EFEFEF} 
	6.&Treewidth $k$                   & 1        & $O(k\log n)$                 &\cite{GPPR04} &$\Theta(n)$ \tablefootnote{Holds for $k\ge2$.\label{foot:kge2}}&   \cite{LLR95} \\ \hline\hline	
	
	\hline \rowcolor[HTML]{FFCE93} 
	\multicolumn{7}{|c|}{\textbf{Prioritized Label-Size/Dimension}}                                                                                            \\ \hline
	\rowcolor[HTML]{EFEFEF} 
	&& \textbf{Distortion} & \multicolumn{2}{|c|}{\textbf{Distance Labeling}}             & \multicolumn{2}{|c|}{\textbf{Embedding into $\ell_\infty$}}  \\ \hline
	7.&General Metric             & $<\frac32$        &  $\Theta(j)$ $^{\ref{foot:exact}}$            &\theoremref{thm:GeneralGraphPriorirtyLabeling}& $\Theta(n)$ 
\tablefootnote{This excludes priority dimension for any function $\alpha:\N\rightarrow\N$ that is independent of $n=|X|$.\label{foot:noPrior}} &   
\theoremref{thm:GeneralGraphPriorityEmbedding}\\ \hline
	\rowcolor[HTML]{EFEFEF} 
	8.&General Metric & $O(\log j)$         & $O(\log j)$                           &\cite{EFN18} & $O(j)$ &                                        \corollaryref{cor:GeneralGraphPriorirtyDistortionEmbedding} \\ \hline
	9.&$\ell_p$ for $p\in[1,2]$                        & $1+\epsilon$        & $O(\epsilon^{-2}\log j)$               &\theoremref{thm:L2toLinftyPriorUB} & $j^{\Omega(\frac{1}{\eps})}$  &                \theoremref{thm:L2toLinftyPriorLB}\\ \hline
	\rowcolor[HTML]{EFEFEF} 
	10.&Tree                   & $1$        & $O(\log j)$                   &\cite{EFN18} & $\Theta(\log j)$ & \theoremref{thm:TreePriorLinfty}\\ \hline
	11.&Planar                   & $1$        & $\Theta(j)$                    &\theoremref{thm:PlanarPrioritzedLB} & $\Theta(n)$ $^{\ref{foot:noPrior}}$   &  \theoremref{thm:PlanarEmbeddingLB}\\ \hline	
	\rowcolor[HTML]{EFEFEF} 
	12.&Treewidth $k$                   & $1$        & $O(k\log j)$                   & \cite{EFN18} & $\Theta(n)$ $^{\ref{foot:kge2}}$ $^{\ref{foot:noPrior}}$ &       \theoremref{thm:PlanarEmbeddingLB}\\ \hline		
\end{tabular}
\caption{\small\it
  Summary of our findings. 
  \Cref{q1:LabVsEmb} is answered by comparing the last two columns of lines 1-6;
  in the very general and very restricted families (lines 1,2,4),
  labelings and embeddings perform similarly,
  while other families 
  (lines 3,5,6) exhibit a strict separation. 
  \Cref{q2:PriorVs} is answered by comparing the last two columns of lines 7-12;
  we see a strict separation between them 
  in all families other than trees (line 10). 
  \Cref{q3:priorVersion} is answered by comparing each line $i=1,\ldots,6$ with line $i+6$;
  for distance labeling, worst-case bound $\beta(n)$ translates to prioritized $O(\beta(j))$ except for planar graphs (lines 5,11),
  while for embeddings, dimension translates to its prioritized version only for trees (lines 4,10).  
  \label{tab:results}}	
\hrule 
\end{table}

\begin{table}[p]
	\centering
	\begin{tabular}{|ll|l|l|l|}
		\hline \rowcolor[HTML]{FFCE93} 
		\multicolumn{5}{|c|}{\textbf{Distance Labelings for General Metrics}}                                                                                              \\ \hline
		\rowcolor[HTML]{EFEFEF} 
		\multicolumn{2}{|l|}{\textbf{Prioritized Distortion}} &  \textbf{Prioritized  Label size}  & \textbf{Notes}&\textbf{Ref} \\ \hline
		1.&$2\cdot\left\lceil \nicefrac{k\log j}{\log n}\right\rceil -1$           &$O(n^{\nicefrac1k}\cdot\log j)$
		& $\forall k\in\N$&\cite{EFN18}\\ \hline
		\rowcolor[HTML]{EFEFEF} 
		2.&$2k-1$
		&  $O(j^{\nicefrac1k}\cdot\log j)$         & $\forall k\in\N$& \cite{EFN18}\\ \hline	\hline
		\rowcolor[HTML]{FFCE93} 
		\multicolumn{5}{|c|}{\textbf{Embeddings of General Metrics}}                                                                                              \\ \hline
		\rowcolor[HTML]{EFEFEF} 
		\multicolumn{2}{|l|}{\textbf{Prioritized Distortion}} &  \textbf{Prioritized  Dimension}  & \textbf{Notes}&\textbf{Ref} \\ \hline
		3.&$O(\log^{4+\eps}j)$           &$O(\log^{4}j)$
		& $\forall$ constant $\eps$&\cite{EFN18}\\ \hline
		\rowcolor[HTML]{EFEFEF} 
		4.&$2\cdot\left\lceil \nicefrac{k\log j}{\log n}\right\rceil -1$
		&  $O(k\cdot n^{\nicefrac{1}{k}}\cdot\log n)$         & $\forall k\in\N$& \cite{EN20}\\ \hline	
		&$2\cdot\lceil \log j\rceil -1$
		&  $O(\log^2 n)$         & & \cite{EN20}\\\hline
		\rowcolor[HTML]{EFEFEF} 5.&$2k\cdot\log\log j+1$
		& $O(k\cdot (j^{\nicefrac{2}{k}}+\log k)\cdot\log n)$                      & $\forall k\in\N$& \cite{EN20} \\\hline	\hline
		6.&$2\cdot\left\lceil \nicefrac{k\log j}{\log n}\right\rceil$
		&  $n^{\nicefrac{1}{k}}\cdot j$        &$\forall k\in\N$ & \corollaryref{cor:GeneralGraphPriorirtyDistortionEmbedding}\\ \hline			
		\rowcolor[HTML]{EFEFEF} 
		&$2\cdot\lceil\log j\rceil$
		&  $2j$        & & \corollaryref{cor:GeneralGraphPriorirtyDistortionEmbedding}\\ \hline

		7.&$2\cdot\lceil\log\log j\rceil$
		& $j^2$                     & & \corollaryref{cor:GeneralGraphPriorirtyDistortionEmbedding} \\ \hline
		
	\end{tabular}
	\caption{{\small\it
			$\ell_\infty$-embeddings and distance labelings of general metrics
			with different trade-offs between prioritized distortion and dimension/label size. 
			The labeling results are superior to their embedding counterparts. 
			Line 6 is obtained by plugging in $t=\frac{\log n}{k}$ in \Cref{cor:GeneralGraphPriorirtyDistortionEmbedding}.
			Comparing the result in line 4 to ours in line 6,
			in the most interesting regime of distortion $2\log j$, 
			we achieve a truly prioritized result (with dimension independent of $n$), 
			while \cite{EN20} avoids linear dependencies in the dimension.
			Our result in line 7 is strictly superior to that of line 5, which is not truly prioritized. 
			However, \cite{EN20} provides a much wider spectrum of possible trade-offs. \label{tab:priorDistortion}}}
	\hrule
\end{table}

Our main results and most relevant previous bounds
are discussed below and summarized in \Cref{tab:results}. 
Additional related work is described in \Cref{subsec:related}.

\paragraph{General Metrics.}

As discussed above, embeddings and labeling schemes for general metrics have essentially the same label size/dimension for all distortion parameters. 
For prioritized labelings and embeddings, the comparison is more complex. 
For exact labeling scheme, one can obtain label size $O(j)$
by simply storing in the label of the point $x_j$ its distances to $x_1,\dots,x_{j-1}$ (recall that we count words).
This is essentially optimal, even if we allow distortion up to $3$, see \Cref{thm:GeneralGraphPriorirtyLabeling}.
In contrast, for embeddings into $\ell_\infty$, we show in \Cref{thm:GeneralGraphPriorityEmbedding} that 
prioritized dimension is impossible for distortion less than $\frac32$.
Specifically, we provide an example where
the images of $x_1$ and $x_2$ must differ in at least $\Omega(n)$ coordinates
for arbitrarily large $n$.
This proves a strong separation between embeddings and labelings, 
and also demonstrates an embedding result that has no prioritized counterpart.

For prioritized distortion $O(\log j)$, Elkin \etal \cite{EFN18}
constructed a labeling with prioritized label size of $O(\log j)$. 
We construct in \Cref{thm:GeneralGraphPriorirtyDistortionEmbedding}
$\ell_\infty$-embeddings with different tradeoffs between 
the prioritized distortion $\alpha$ and dimension $\beta$. 
Two representative examples are prioritized distortion
$\alpha(j)=O(\log j)$ with prioritized dimension $\beta(j)=O(j)$, 
and $\alpha(j)=O(\log\log j)$ with $\beta(j)=O(j^2)$. 
This is significantly better than for the $O(1)$-distortion case,
yet considerably weaker than results on labeling.

Additional interesting results in this context were given in~\cite{EFN18}, 
showing that every metric space embeds into every $\ell_p$, $p\in[1,\infty]$,  with prioritized distortion $O(\log^{4+\eps}j)$
and prioritized dimension $O(\log^4j)$ (for every constant $\eps>0$).
Furthermore, independently and concurrently to our work,  Elkin and Neiman \cite{EN20} obtained two additional embeddings into $\ell_\infty$, for any integer parameter $k\ge 1$, there are embeddings with:
(1) prioritized distortion $2\left\lceil \nicefrac{k\log j}{\log n}\right\rceil -1$ and  dimension $O(k\cdot n^{\frac{1}{k}}\cdot\log n)$ (not prioritized); and
(2) prioritized distortion $2k\log\log j+1$ and prioritized
dimension $O(k\cdot j^{\frac{2}{k}}\cdot\log n)$ (note that the dimension bounds here also depend on $n=|X|$ 
and hence are not truly prioritized). 
See \Cref{tab:priorDistortion} for a comparison of these results with ours.

\paragraph{$\ell_p$ Spaces.}
The seminal Johnson-Lindenstrauss Lemma~\cite{JL84} states that
every $n$-point subset of $\ell_2$ embeds with distortion $1+\eps$
into $\ell_2^{O(\eps^{-2}\log n)}$
(where as usual $\ell_p^d$ denotes the $d$-dimensional $\ell_p$ space),
and this readily implies a labeling with distortion $1+\eps$ 
and label size $O(\eps^{-2}\log n)$.
Since every $\ell_p$, $p\in[1,2]$, embeds isometrically into squared-$L_2$
(equivalently, its snowflake embeds into $L_2$),
this implies a labeling with the same performance
for $\ell_p$ as well, see \Cref{thm:L2toLinftyPriorUB}. 
Furthermore, we show in \Cref{thm:L2toLinftyPriorUB} (using \cite{NN19}) that
this labeling can be prioritized to achieve
distortion $1+\eps$ with label size $O(\eps^{-2}\log j)$.

For $\ell_\infty$- embeddings, the performance is significantly worse. 
We show in \Cref{thm:LBL2toLinfty} that 
for certain $n$-point subsets of $\ell_p$, for any $p\in[1,\infty]$,
embedding into $\ell_\infty$ with distortion less than $\sqrt{2}$
requires $\Omega(n)$ coordinates 
(recall that $O(n)$ coordinates are sufficient to isometrically embed
every $n$-point metric into $\ell_\infty$.)
For prioritized embeddings into $\ell_\infty$ with distortion $1+\eps$, 
we prove a lower bound of $j^{\Omega(\frac1\eps)}$ on the prioritized dimension, see \Cref{thm:L2toLinftyPriorLB}.

\paragraph{Tree Metrics.} 
Trees are a success story, where both labelings and embeddings have the same performance. 
Here we study metric spaces that induced by the shortest path metric of weighted trees.
In their seminal paper on metric embeddings,
Linial, London and Rabinovich~\cite{LLR95} proved that every $n$-node tree
embeds isometrically into $\ell_\infty^{O(\log n)}$. 
In the context of routing, Thorup and Zwick~\cite{TZ01} constructed
an exact distance labeling with label size $O(\log n)$ 
(where routing decisions can be done in constant time),
and Elkin \etal \cite{EFN18} modified this
to achieve prioritized label size $O(\log j)$. 
Our contribution is a prioritized version of~\cite{LLR95},
i.e., an isometric embedding of a tree metric into $\ell_\infty$
with prioritized dimension $O(\log j)$, see \Cref{thm:TreePriorLinfty}. 
We note that an equivalent result was proved independently and concurrently
by Elkin and Neiman~\cite{EN20}.

\paragraph{Planar Graphs and Restricted Topologies.}
Here we study metric spaces that induced by the shortest path metric of weighted graphs with restricted topologies.
We first consider exact distance labeling and isometric embeddings.
Gavoille \etal \cite{GPPR04} showed that planar graphs
admit exact labeling with label size $O(\sqrt{n})$,
and proved a matching lower bound.%
\footnote{This lower bound, as well as all other lower bounds
  from~\cite{GPPR04}, count bits rather than words. 
}
They further showed that treewidth-$k$ graphs admit exact labeling
with label size $O(k\log n)$. 
Linial \etal \cite{LLR95} proved that an isometric embedding
of the $n$-cycle graph into $\ell_\infty$, and in fact into any normed space,
requires $\Omega(n)$ coordinates.%
\footnote{Their proof is much more general than what is required for $\ell_\infty$. 
  For a simpler proof for the special case studied here, see \Cref{thm:PlanarEmbeddingLB}.
}
Notice that the cycle graph is both planar and has treewidth $2$;
hence, there is a strict separation between distance labeling
and $\ell_\infty$-embedding. 

For exact prioritized distance labeling, we prove that planar graphs
require prioritized label size $\Omega(j)$ (based on \cite{GPPR04}),
see \Cref{thm:PlanarPrioritzedLB}. 
This bound is tight, as prioritized label size $O(j)$ is possible
already for general graphs (\Cref{thm:GeneralGraphPriorirtyLabeling}). 
We conclude that priorities make exact distance labelings
much harder for planar graphs.%
\footnote{Interestingly, for unweighted planar graphs,
  Gavoille \etal \cite{GPPR04} prove only a lower bound
  of $\Omega(n^{\frac13})$ on the label size,
  and closing the gap to the upper bound $O(\sqrt{n})$
  remains an important open question.
} 
This lower bound for exact prioritized labeling holds for unweighted graphs
as well, hence this type of labeling is now well understood.
For treewidth-$k$ graphs, Elkin \etal \cite{EFN18}
constructed exact labeling with prioritized label size $O(k\log j)$.
For isometric embeddings into $\ell_\infty$,
we show in \Cref{thm:PlanarEmbeddingLB}
that no prioritized dimension is possible for the cycle graph, 
which provides a lower bound for both planar and treewidth-$2$ graphs. 
This implies a dramatic separation for these families.

Additional results on labelings with $1+\eps$ distortion, and embeddings with constant distortion are described in \Cref{subsec:related}.

\subsection{Related Work}\label{subsec:related} 
For distortion $1+\eps$ in planar graphs, Thorup~\cite{Tho04}
and Klein~\cite{Kle02} constructed distance labels of size $O(\log n/\eps)$. 
Abraham and Gavoille~\cite{AG06} generalized this result to $K_r$-minor-free graphs, achieving label size $O(g(r)\log n/\eps)$. 
\footnote{The function $g(r)$ depends only on $r$ and is taken from the  structure theorem of Robertson and Seymour~\cite{RS03}\label{foot:RS03}.}
No low-dimension embedding into $\ell_\infty$ with distortion $1+\eps$
is known for planar graphs or even treewidth-$2$ graphs. 
If one allows larger distortion,
Krauthgamer \etal \cite{KLMN05} proved that planar graphs
embed with distortion $O(1)$ into $\ell_\infty^{O(\log n)}$,
or more generally that $K_r$-minor-free graphs
embed with distortion $O(r^2)$ into $\ell_\infty^{O(3^r\cdot\log n)}$. 
Abraham \etal \cite{AFGN18} showed that $K_r$-minor-free graphs embed
with distortion $O(1)$ into $\ell_\infty^{O(g(r)\log^2 n)}$.
Turning to priorities, Elkin \etal \cite{EFN18} constructed
prioritized versions of distance labeling with distortion $1+\eps$. 
Specifically, for planar and $K_r$-minor-free graphs they achieve label sizes
of $O(\log j/\eps)$ and $O(g(r)\log j/\eps)$, respectively. 
No prioritized embeddings are known, nor lower bounds thereof.

Elkin \etal \cite{EFN17} studied the problem of {\em terminal} distortion, 
where there is specified a subset 
$K\subset X$ of terminal points, and the goal is to embed the entire space $(X,d)$ while preserving pairwise distances among 
$K\times X$. 
For additional embeddings with terminal distortion see \cite{EN18,BFN19}.
Embeddings with terminal distortion can be used used to construct embeddings with prioritized distortion. 
We utilize this approach in~\Cref{thm:L2toLinftyPriorUB,thm:TreePriorLinfty}.

Abraham \etal \cite{ABN11} studied {\em scaling} distortion, which provides improved distortion for $1-\eps$ fractions of the pairs, 
simultaneously for all $\eps\in(0,1)$, as a function of $\eps$. A stronger version called {\em coarse scaling} distortion has 
improved distortion guarantees for the farthest pairs. Bartal \etal \cite{BFN19} showed that scaling distortion and prioritized 
distortion (in the sense of \cite{EFN18}) are essentially equivalent, but this is not known to hold for the prioritized contractive 
distortion we use in the current paper (see footnote (\ref{foot:PrioritzedContractive})).

\subsection{Preliminaries}
The $\ell_p$-norm of a vector $x=(x_1,\dots,x_d)\in \R^d$ is $\Vert
x \Vert_{p} : =(\sum_{i=1}^{d}|x_{i}|^{p})^{1/{p}}$, where $\Vert
x\Vert_{\infty} :=\max_{i}|x_{i}|$.
An embedding $f$ between two metric spaces $(X,d_X)$ and $(Y,d_Y)$ has distortion $c\cdot t$ if for every $x,y\in X$, $\frac1c\cdot d_X(x,y)\le d_Y(f(x),f(y))\le t\cdot d_X(x,y)$. $t$ (resp. $c$) is the \emph{expansion} (resp. \emph{contraction}) of $f$. If the expansion is $1$, we say that $f$ is \emph{Lipschitz}, while if $c=1$ we say that the embedding is \emph{non-contractive}. Embedding with distortion $1$ (where $c=t=1$) is called \emph{isometric}.

Embedding $f:X\rightarrow \ell_\infty^d$ can be viewed as a collection of embeddings $\{f_i\}_{i=1}^d$ into the line $\R$. By 
scaling we can assume that the embedding is non-contractive. That is, if $f$ has distortion $t$ then for every $x,y\in X$ and $i$, $|f_i(x)-f_i(y)|\le t\cdot d_X(x,y)$ and there is some index $i_{x,y}$ such that $d_X(x,y)\ge 
|f_{i_{x,y}}(x)-f_{i_{x,y}}(y)|$. We say that the pair $x,y$ is \emph{satisfied} by the coordinate $i_{x,y}$.

We consider connected undirected graphs $G=(V,E)$ with edge weights $w: E \to \R_{> 0}$. Let $d_{G}$ denote the shortest path metric in $G$. For a vertex $x\in V$ and a set 
$A\subseteq V$, let $d_{G}(x,A):=\min_{a\in A}d(x,a)$, where $d_{G}(x,\emptyset):= \infty$. We often abuse notation and write the graph $G$ instead of its vertex set $V$.

We always measure the size of a label by the number of words needed to store it (where each word contains $O(\log n)$ bits). 
For ease of presentation, we will ignore issues of representation and bit counting. 
In particular, we will assume that every pairwise distance can be represented in a single word. 
We note however that the lower bounds of \cite{GPPR04} are given in bits, and therefore our
\Cref{thm:PlanarPrioritzedLB} is as well.

All logarithms are in base $2$. Given a set $A$, ${A\choose 2}=\left\{\{x,y\}\mid x,y\in A,x\ne y\right\}$ denotes all the subsets of size $2$.
The notation $x=(1\pm\eps)\cdot y$ means $(1-\eps)y\le x\le (1+\eps)y$.

\section{General Graphs}\label{sec:generalGraphs}
In this section we discuss our result on succinct representations of general metric spaces.
We start with the regime of small distortion. Recall that there exist both exact distance labelings with $O(n)$ label size \cite{GPPR04} 
as well as isometric embeddings into 
$\ell_\infty^n$ \cite{MatEmbedding13}, and both results are essentially tight (even if one allows distortion $<3$).
In the following theorem we provide lower and upper bounds for exact distance labelings with prioritized label size. 
\begin{theorem}\label{thm:GeneralGraphPriorirtyLabeling}
Given an $n$-point metric space $(X,d)$ and priority ordering $X=\{x_1,\dots,x_n\}$, 
there is an exact labeling scheme with prioritized label size $j$. This is asymptotically tight, that is every exact labeling scheme must have prioritized label size $\Omega(j)$.
Furthermore, for $t<3$, every labeling scheme with distortion $t$ must have prioritized label size $\tilde{\Omega}(j)$.
\end{theorem}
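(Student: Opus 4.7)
The plan has three parts, corresponding to the upper bound, the exact lower bound, and the distortion-$t$ lower bound.

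\textbf{Upper bound.} My proof is the direct scheme hinted at in the introduction. For each $x_j$, set
\[
  l(x_j) \;=\; \bigl(\, j,\; d(x_j,x_1),\; d(x_j,x_2),\; \dots,\; d(x_j,x_{j-1})\,\bigr).
\]
The first field stores the priority index (one word), followed by $j-1$ distance entries (one word each since a single distance fits in a word), for a total of exactly $j$ words. The decoding algorithm $\mathcal{A}$, given two labels, extracts the priority indices, identifies which index is smaller (say $i<j$), and reads $d(x_j,x_i)$ from position $i+1$ of the longer label. This yields exact distances and prioritized label size $j$.

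\textbf{Lower bounds.} For both tightness claims I would reduce directly from the worst-case lower bounds of Gavoille \etal~\cite{GPPR04}. The key observation is that any prioritized labeling with size function $\beta$ is, in particular, an (unprioritized) labeling with max label size $\beta(n)$ on every $n$-point input, once we replace $\beta$ by its monotone envelope $\beta'(j):=\max_{i\le j}\beta(i)$ (which is WLOG, since adding zero-padding cannot hurt). Therefore, if the worst-case lower bound says that every exact labeling on some $n$-point metric needs some label of size $\Omega(n)$, then for that metric $\beta(n)\ge \beta'(n)=\Omega(n)$, regardless of the priority order chosen. Applying this with $n=j$ gives $\beta(j)=\Omega(j)$ for every $j$, which is exactly the prioritized lower bound. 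For the distortion-$t$ case with $t<3$, the same reduction works: plug in the $\tilde\Omega(n)$ max-label-size lower bound of~\cite{GPPR04} for distortion $<3$ in place of the exact bound, and conclude $\beta(j)=\tilde\Omega(j)$.

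\textbf{Main obstacle.} The proof itself is short; the only thing to be careful about is the reduction step in the lower bound. In particular, I need the worst-case bound of~\cite{GPPR04} in the form ``\emph{max} label size is $\Omega(n)$'', which is immediate since its proof is a counting/entropy argument over all labels, not tied to a specific vertex. Once the envelope trick justifies monotonicity of $\beta$, no further adversarial placement of the ``hard'' vertex in the priority order is needed — the bound transfers uniformly. The same remark applies to the distortion-$t$ statement, up to keeping track of the logarithmic factor hidden in $\tilde\Omega$.
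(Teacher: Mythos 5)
Your upper bound and your reduction ``worst-case max-label lower bound $\Rightarrow$ prioritized lower bound'' are exactly the paper's: it also stores the index of $x_j$ plus its $j-1$ distances to higher-priority points, and for distortion $t<3$ it likewise just quotes the $\Omega(n)$-bit (hence $\tilde\Omega(n)$-word) bound of \cite{GPPR04} and observes that the heavy label sits at some priority $i\le n$, giving $\tilde\Omega(j)$. (Your inequality ``$\beta(n)\ge\beta'(n)$'' is written backwards --- the monotone envelope dominates $\beta$, not the other way around --- but the intended step, that a scheme with $\beta(j)=o(j)$ has envelope $o(j)$ as well and hence max label size $o(n)$ on $n$-point inputs, is sound and is the same one-line transfer the paper uses.)

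The genuine gap is in the exact $\Omega(j)$ bound. You treat ``every exact labeling on some $n$-point metric has a label of $\Omega(n)$ \emph{words}'' as a black-box citation to \cite{GPPR04}. But, as the paper itself notes, the lower bounds in \cite{GPPR04} are stated in bits and for unweighted graphs: $\Omega(n)$ bits is only $\tilde\Omega(n)$ words, so your reduction as written yields $\tilde\Omega(j)$, which is just the distortion-$t<3$ statement again and does not match the $j$-word upper bound you claim it is tight against. To get the asymptotically tight $\Omega(j)$ in words, the paper supplies its own counting argument for weighted metrics: take complete graphs on $n$ vertices with integer edge weights in $\{n+1,\dots,2n\}$ (so each of the $n^{\binom{n}{2}}$ weight assignments is realized as its own shortest-path metric); the $n$ labels jointly determine the metric, hence $\max_i |l(x_i)|\ge \frac{1}{n}\log n^{\binom{n}{2}}=\Omega(n\log n)$ bits $=\Omega(n)$ words, and only then does your transfer argument give $\Omega(j)$ words. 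You need to add this (or an equivalent) worst-case counting step; with it, the rest of your plan goes through.
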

\begin{proof}
	We begin by constructing the labeling scheme. 
	The label of $x_j$ simply consists of the index $j$ and $d(x_1,x_j),d(x_2,x_j),\dots,d(x_{j-1},x_{j})$. 
	The size bound and algorithm for answering queries are straightforward.
	If one allows distortion $t<3$, \cite{GPPR04} proved that every labeling scheme with distortion $t$ must have label size of 
	$\Omega(n)$ bits, or $\tilde{\Omega}(n)$ words. As some vertex must have a label of size $\tilde{\Omega}(n)$, the prioritized 
	lower bound  $\tilde{\Omega}(j)$ follows. 
	
	Finally, we prove the $\Omega(j)$ lower bound for exact distance labeling. We begin by arguing that some label must be of length 
	$\Omega(n)$ (in words), and then the $\Omega(j)$ lower bound for prioritized label size follows.
	The proof follows the steps of \cite{GPPR04}. 
	Consider a complete graph with ${n\choose 2}$ edges all having integer weights in $\{n+1,n+2,\dots,2n\}$. 
	Note that there are $n^{{n\choose 2}}$ such graphs, where each choice of weights defines a different shortest path metric. 
	Given an exact labeling scheme, the labels $l(x_1),\dots,l(x_n)$ precisely encode the graph. 
	Following arguments from \cite{GPPR04}, the sum of lengths of the labels must be at least logarithmic in the number of different graphs. 
	Thus
	$$\max_i|l(x_i)|\ge\frac1n\cdot\log n^{{n\choose 2}} =\Omega(n\log n)~.$$
	We conclude that some label length must be of $\Omega(n\log n)$ bits, or $\Omega(n)$ words.
\end{proof}
	
While under the standard worst-case model distance labelings and embeddings into $\ell_\infty$ behave identically, 
we show that the prioritized versions are very different. In the following 
theorem we show that no prioritized dimension is possible, even if one allows distortion $<\frac32$ (note that for  much larger distortions, prioritized dimension is possible. See \cite{EFN18} and \Cref{cor:GeneralGraphPriorirtyDistortionEmbedding}). 
	
	\begin{theorem}\label{thm:GeneralGraphPriorityEmbedding}
	There is no function $\alpha:\N\rightarrow\N$ such that every metric space can be
	embedded into $\ell_{\infty}$ with prioritized dimension $\alpha$
	and distortion $t<\frac{3}{2}$ (for any fixed $t$).
	\end{theorem}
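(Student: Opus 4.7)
The goal is to show that no function $\alpha:\mathbb{N}\to\mathbb{N}$ (independent of $n$) can serve as a prioritized dimension for $\ell_\infty$-embedding of every metric space with distortion $t<\tfrac{3}{2}$. To this end I would construct, for arbitrarily large $n$, an $n$-point metric with designated $x_1,x_2$ such that every such embedding $f$ must have $f(x_1)$ and $f(x_2)$ disagree in $\Omega(n)$ coordinates. Since the prioritized dimension forces $f(x_j)$ to be supported on the first $\alpha(j)$ coordinates, the disagreement set is contained in $\{1,\dots,\max(\alpha(1),\alpha(2))\}$, which would force $\max(\alpha(1),\alpha(2))\ge\Omega(n)$ for arbitrarily large $n$, a contradiction.

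The metric I would build uses $x_1,x_2$ and $m=\Theta(n)$ ``gadgets''. A natural first attempt takes $d(x_1,x_2)=2$ and, for each $i$, two gadget points $a_i,b_i$ satisfying $d(x_1,a_i)=d(x_2,b_i)=1$, $d(x_2,a_i)=d(x_1,b_i)=3$, and $d(a_i,b_i)=4$, so that the chain $a_i\to x_1\to x_2\to b_i$ is a tight geodesic. Intergadget distances are chosen to respect the triangle inequality (e.g.\ $d(a_i,a_j)=d(b_i,b_j)=2$, $d(a_i,b_j)=4$). For each $i$, let $k_i$ be any coordinate realizing the demand $|f_{k_i}(a_i)-f_{k_i}(b_i)|\ge 4$. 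The distortion bound gives $|f_{k_i}(x_1)-f_{k_i}(a_i)|\le t$ and $|f_{k_i}(x_2)-f_{k_i}(b_i)|\le t$, so by the triangle inequality on the real line
$$|f_{k_i}(x_1)-f_{k_i}(x_2)|\;\ge\;4-2t\;>\;1,$$
and hence $k_i$ is a coordinate on which $f(x_1)\ne f(x_2)$.

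The main obstacle is to guarantee that the satisfying coordinates $k_1,\dots,k_m$ are pairwise distinct. A shared coordinate $k_i=k_j$ can occur in the baseline construction by pushing $a_i,a_j$ onto one extreme around $f_k(x_1)$ and $b_i,b_j$ onto the opposite extreme around $f_k(x_2)$, which is consistent with the intergadget triangle inequalities. I therefore expect the construction to require additional coupling---e.g., auxiliary witness points per gadget, or carefully tuned cross-gadget distances---so that a hypothetical coincidence $k_i=k_j$ forces some cross pair to exceed the allowed $\ell_\infty$-distance of $t$ times its metric distance. The threshold $t=\tfrac{3}{2}$ should appear as the exact value at which the decisive inequality in this case analysis flips, and a short case analysis of the geometry on the real line (using the pinning of $a_\ell,b_\ell$ to intervals of width $2t$ around $f_k(x_1)$ and $f_k(x_2)$) will rule out coincidence.

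Granted distinctness, the $m=\Omega(n)$ coordinates on which $f(x_1)$ and $f(x_2)$ disagree all lie in $\{1,\dots,\max(\alpha(1),\alpha(2))\}$, giving $\max(\alpha(1),\alpha(2))\ge m$. Letting $n\to\infty$ contradicts the supposed $n$-independence of $\alpha$, completing the proof.
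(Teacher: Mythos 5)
Your high-level reduction is sound and matches the paper's: exhibit, for arbitrarily large $n$, a metric in which the images of the two highest-priority points must differ in $\Omega(n)$ coordinates, whence $\max(\alpha(1),\alpha(2))\ge\Omega(n)$ rules out any $n$-independent $\alpha$. The local estimate is also fine: in a coordinate $k$ satisfying a gadget pair, $|f_k(x_1)-f_k(x_2)|\ge 4-2t>1$ for $t<\tfrac{3}{2}$. But there is a genuine gap at exactly the step you flag and then leave open: nothing forces the satisfying coordinates $k_1,\dots,k_m$ to be distinct, and with your stated distances they simply are not. Indeed, with $d(a_i,b_j)=4$ and $d(a_i,a_j)=d(b_i,b_j)=2$, the single coordinate $f(a_i)=0$, $f(b_i)=4$ for all $i$, $f(x_1)=1$, $f(x_2)=3$ has expansion $1$ and satisfies every gadget pair simultaneously, so your argument as written yields only \emph{one} coordinate on which $f(x_1)\ne f(x_2)$, and no lower bound on $\alpha$ follows. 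Producing $\Omega(n)$ coordinates that must each be ``spent'' on a different pair is the entire content of the theorem; the proposal only conjectures that some extra coupling will enforce it, with no construction and no case analysis, and the guess that $t=\tfrac{3}{2}$ is where that analysis would flip is not substantiated.

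For comparison, the paper does not build the hard instance explicitly: by a counting argument (following Matou\v{s}ek) it finds a bipartite graph $G$ on $L\cup R$, $|L|=|R|=n$, with all same-side distances $2$, cross distances in $\{1,3\}$, such that any embedding with distortion $2t<3$ needs $\Omega(n)$ coordinates; it then appends hubs $l,r$ adjacent to all of $R$, resp.\ $L$. Distance-$1$ and distance-$2$ pairs can be satisfied with $O(\log n)$ coordinates, so $\Omega(n)$ coordinates must satisfy distance-$3$ pairs, and any coordinate with $f_i(l)=f_i(r)$ confines all images to a window of width $2t<3$ (every vertex is at distance $1$ from $l$ or $r$), hence satisfies no such pair; this is where the threshold $\tfrac{3}{2}$ actually comes from. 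Your explicit-gadget route can in fact be repaired, e.g.\ by taking $d(a_i,b_j)=2$ for $i\ne j$ (this is still a metric): if one coordinate satisfied both pairs $i$ and $j$, then in the same-orientation case the constraints $|f_k(a_i)-f_k(b_j)|\le 2t$ and $f_k(b_j)\ge f_k(a_j)+4$ give $f_k(a_j)\le f_k(a_i)+2t-4$, symmetrically $f_k(a_i)\le f_k(a_j)+2t-4$, forcing $t\ge 2$, and the opposite orientation is handled similarly via $|f_k(a_i)-f_k(a_j)|\le 2t$ and $|f_k(b_i)-f_k(b_j)|\le 2t$; so the $k_i$ are pairwise distinct for every $t<2$. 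But since this choice of cross distances and this case analysis are precisely what is missing, the proposal as submitted does not prove the theorem.
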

	
	\begin{proof}
	Consider the family $\mathcal{G}$ of unweighted bipartite graphs
	$G=\left(V=L\cup R,E\right)$ where $|L|=|R|=n$, for large enough $n$. 
	We first argue that there is a graph $G\in\mathcal{G}$ with the following properties:
	\begin{enumerate}
	\item[(1)] For every $u,v\in R$ or $u,v\in L$, we have $d_{G}(u,v)=2$.
	\item[(2)] Every embedding $f:G\rightarrow\ell_{\infty}$ with distortion $2t$
	requires $\Omega(n)$ coordinates.
	\end{enumerate}
	The existence of $G$ follows by a counting argument similar to \cite{MatEmbedding13}. Note that $|\mathcal{G}|=2^{n^2}$.
	Denote by $\mathcal{G}'\subseteq \mathcal{G}$ the graphs in  $\mathcal{G}$ fulfilling property (1).
	Our first step is to lower bound $|\mathcal{G}'|$. Sample uniformly a graph $G\in\mathcal{G}$. 
	For $u,v\in R$ (resp. $u,v\in L$) let $I_{u,v}$ be an indicator for the event $d_G(u,v)\ne 2$. 
	$I_{u,v}$ occurs if and only if $u$ and $v$ do not have a common neighbor in $L$ (resp. $R$). Then $\Pr[I_{u,v}]=(\frac34)^n$. 
	By a union bound, the probability that property (1) does not hold is at most $2\cdot{n\choose 2}\cdot (\frac34)^n$. We conclude 
that  $|\mathcal{G}'|\ge 2^{n^2}\cdot \left(1-2\cdot{n\choose 2}\cdot (\frac34)^n\right)\ge\frac12\cdot2^{n^2}$.
	Matou{\v{s}}ek \cite{MatEmbedding13} (Proposition 3.3.1) implicitly proved that for any subset $\mathcal{G}'$ of $\mathcal{G}$, if all of $\mathcal{G}'$ embeds into $\ell_\infty^d$ with distortion $2t<3$, then
	$$c^{d\cdot n}\ge |\mathcal{G}'|~,$$
where $c>1$ is a constant depending on $3-2t$. Thus $d=\Omega(n)$. We conclude that there is a graph $G\in\mathcal{G}$ fulfilling both properties $(1),(2)$.

Consider such a graph $G=\left(V=L\cup R,E\right)$.
Note that property (1) implies that there are no isolated vertices, and moreover for every $u\in R$, $v\in L$,
	$d_{G}(u,v)\in\left\{ 1,3\right\}$. 
Let $G'$ be the graph
$G$ along with two new vertices $l,r$ where $l$ (resp.\ $r$)
	has edges to all vertices in $R$ (resp.\ $L$). Note that for every
	$u,v\in V$, $d_{G}(u,v)=d_{G'}(u,v)$. Set $L'=L\cup\{l\}$ and $R'=R\cup\{r\}$. 
	\begin{claim}\label{clm:GeneralNoPriority}
	Every embedding $f:G'\rightarrow\ell_{\infty}$ with distortion $t < \frac{3}{2}$
	has $\Omega(n)$ coordinates $i$ for which $f_{i}(l)\ne f_{i}(r)$. 
	\end{claim}
	\begin{proof}
	We assume that the embedding $f$ has expansion at most $t$, and for every pair of vertices there is a coordinate where the pair is satisfied (i.e.\ not contracted).
	\sloppy Set $\mathcal{A}_{i}=\left\{ \left\{ u,v\right\} \in{L'\cup R' \choose 2}\mid d_{G'}(u,v)=i\right\}$ to be all the vertex pairs at distance exactly $i$.
	Note that ${L'\cup R' \choose 2}=\mathcal{A}_{1}\cup\mathcal{A}_{2}\cup\mathcal{A}_{3}$.
	To satisfy all the pairs in ${L'\cup R' \choose 2}$, $\Omega(n)$ coordinates
	are required (this is property (2)). We will show that we can satisfy
	all the pairs in $\mathcal{A}_{1}\cup\mathcal{A}_{2}$ using $O\left(\log n\right)$ coordinates only. 
Thus satisfying all the pairs in $\mathcal{A}_{3}$ requires $\Omega(n)$ coordinates.

The clique $K_n$ can be embedded isometrically into $\ell_\infty^{\lceil\log n\rceil}$ \cite{LLR95}. Such an embedding can be 
constructed by simply mapping $K_n$ to different combinations of $\{0,1\}^{\lceil\log n\rceil}$. 
As $1$ is the minimal distance, we can just embed all the $2n+2$ vertices as a clique using $O(\log(n))$ coordinates. 	
By doing so, all the pairs in $\mathcal{A}_{1}$ will be satisfied. 
$\mathcal{A}_{2}$ equals ${L' \choose 2}\cup{R' \choose 2}$.
Note that the metric induced on ${L' \choose 2}$ is just a clique with edges of length 2. 
Thus we can embed all of $L'$ to the vectors $\{\pm1\}^{O(\log n)}$. Additionally send all of $R'$ 
	to $\vec{0}$. Note that by doing so we satisfied all the pairs in ${L' \choose 2}$ while incurring no expansion. Similarly we can satisfy all the pairs in ${R' \choose 2}$ using 
$O(\log n)$ additional coordinates.

	Next consider an arbitrary embedding $f:G'\rightarrow\ell_{\infty}$ with distortion $t < \frac{3}{2}$.
	We argue that in a coordinate $f_i:G'\rightarrow \R$ where $f_i(l)=f_i(r)$,
	no pair of $\mathcal{A}_{3}$ is satisfied. Indeed, every vertex
	$v\in L'\cup R'$ is at distance $1$ from either $l$ or $r$. As we have expansion at most $t$,  in a coordinate
	$i$ where $f_i(l)=f_i(r)$ the maximal distance
	between a pair of vertices $v,u$ is $2\cdot t$. In particular, for every
	$\left\{ v,u\right\} \in\mathcal{A}_{3}$, $\left|f_i(x)-f_i(y)\right|\le2\cdot t<3$. Thus no pair $\{v,u\}\in\mathcal{A}_3$ is satisfied.
		
	As there must be $\Omega(n)$ coordinates where some pair from $\mathcal{A}_{3}$
	is satisfied, necessarily there are $\Omega(n)$ coordinates where
	$f_i(l)\ne f_i(r)$.
	\end{proof}

	We conclude that there are $\Omega(n)$ coordinates where at least one of $l,r$ is not mapped to $0$. 
	Set $\pi$ to be any priority ordering wherein $l$ and $r$ have priorities $1$ and $2$ respectively.
	For every priority function $\alpha:\N\rightarrow\N$, by taking $n\gg\alpha(2),\alpha(1)$, there is no embedding with prioritized dimension $\alpha$ 
	with respect to $\pi$. The theorem follows.
	\end{proof}

	Considering that  for distortion less than $\frac32$ no prioritized dimension is possible, 
it is natural to ask for what distortions are prioritized embeddings possible.
	Some previous results of this nature are described in the introduction \cite{EFN18,EN20}. 
As exact distance labeling is possible using $O(j)$ labels, it is also
natural to ask what distortion can be obtained with prioritized dimension $O(j)$.
	The following is a meta theorem constructing various trade-offs.
	We present some specific implications in \Cref{cor:GeneralGraphPriorirtyDistortionEmbedding}. A comparison between our results and other results appears in \Cref{tab:priorDistortion}.
	
	Consider a monotone function $\beta:\N\rightarrow\N$. 
	For $j\in\N$, let $\chi_\beta(j)$ be the minimal $i$ such that $\beta(\chi_\beta(j))\ge j$.
	\begin{theorem}\label{thm:GeneralGraphPriorirtyDistortionEmbedding}
	Given a metric space $(X,d)$ with priority ordering $X=\{x_1,\dots,x_n\}$ and a function $\beta:\N\rightarrow\N$, there is an embedding $f:X\rightarrow\ell_\infty$ with  prioritized dimension $\beta(\chi_\beta(j))$ and contractive prioritized distortion $2\cdot\chi_\beta(j)$. 
	\end{theorem}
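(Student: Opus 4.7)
My plan is to build the embedding iteratively in levels $i = 1, 2, \ldots, \chi_\beta(n)$, adding new coordinates at each level. Write $K_i = \{x_1, \ldots, x_{\beta(i)}\}$ (with the convention $K_0 = \emptyset$), so that the ``level-$i$'' points are exactly those $x_j$ with $\chi_\beta(j) = i$, namely $x_j \in K_i \setminus K_{i-1}$. At level $i$ I would introduce, for each such $x_j$, a new coordinate
$$c_j(x) \;=\; \min\bigl(d(x, x_j),\ d(x, K_{i-1})\bigr).$$
Each $c_j$ is $1$-Lipschitz on $X$ and vanishes on $K_{i-1} \cup \{x_j\}$. Consequently, every $x_{j''}$ with $\chi_\beta(j'') < i$ lies in $K_{i-1}$ and is mapped to $0$ by every coordinate added at level $i$. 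Indexing coordinates in order of increasing level therefore forces $f(x_j)$ to be supported only on coordinates associated to levels at most $\chi_\beta(j)$, so $x_j$ uses at most the first $\beta(\chi_\beta(j))$ coordinates, and $f$ is non-expansive since it is coordinatewise $1$-Lipschitz.

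For the contraction bound I would induct on $i$, showing that the partial embedding $f^{(i)}$ obtained after processing levels $1, \ldots, i$ satisfies
$$\|f^{(i)}(x_j) - f^{(i)}(x_{j'})\|_\infty \;\geq\; \frac{d(x_j, x_{j'})}{2\chi_\beta(j)}$$
for every pair with $j < j'$ and $x_j \in K_i$. Pairs with $x_j \in K_{i-1}$ are handled directly by the inductive hypothesis (additional level-$i$ coordinates can only increase the $\ell_\infty$ distance). The crux is a pair with $\chi_\beta(j) = i$. Fix a threshold $\tau$ to be tuned. The coordinate $c_j$ has $c_j(x_j) = 0$ and $c_j(x_{j'}) = \min\bigl(d(x_j, x_{j'}),\, d(x_{j'}, K_{i-1})\bigr)$. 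If $d(x_{j'}, K_{i-1}) \geq \tau \cdot d(x_j, x_{j'})$ then $c_j$ itself already certifies the bound. Otherwise, pick $x_k \in K_{i-1}$ nearest to $x_{j'}$, so $d(x_k, x_{j'}) < \tau \cdot d(x_j, x_{j'})$ and, by the triangle inequality, $d(x_k, x_j) > (1-\tau) \cdot d(x_j, x_{j'})$. I then invoke the inductive hypothesis on $(x_k, x_j)$ (legitimate since $k \leq \beta(i-1) < j$), combine with $\|f(x_k) - f(x_{j'})\|_\infty \leq d(x_k, x_{j'})$ from non-expansion, and apply the $\ell_\infty$ triangle inequality to lower-bound $\|f(x_j) - f(x_{j'})\|_\infty$.

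The main obstacle is selecting $\tau$ (and possibly enriching the level-$i$ construction) so that the induction closes with the precise factor $2\chi_\beta(j) = 2i$ stated in the theorem. A naive choice $\tau = 1/(2i)$ leaves a residue that compounds across levels, giving distortion $\Theta(i^2)$ rather than $2i$; the fix is either a level-dependent $\tau$ chosen so that the case-(b) bound and case-(a) bound meet at the target $1/(2i)$, or enriching each level with a small family of Matou\v{s}ek-style coordinates $d(\cdot,\, K_{i-1} \cup A)$ for random subsets $A \subseteq K_i \setminus K_{i-1}$ of geometrically varying densities, so that case~(b) can be discharged by a fresh coordinate at level $i$ without recursing to the previous level. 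Since the dimension budget at level $i$ is $\beta(i) - \beta(i-1)$ new coordinates, such an enrichment does not violate the prioritized dimension bound. Once the inductive step is closed with the correct constant, the theorem follows, and the specific tradeoffs announced in \Cref{cor:GeneralGraphPriorirtyDistortionEmbedding} are obtained by instantiating $\beta$ appropriately.
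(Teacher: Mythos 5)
Your construction is exactly the paper's: the coordinate you attach to $x_j$, namely $\min\bigl(d(x,x_j),d(x,K_{i-1})\bigr)=d\bigl(x,K_{i-1}\cup\{x_j\}\bigr)$, is precisely the embedding used in the paper, and your arguments for the prioritized dimension $\beta(\chi_\beta(j))$ and for non-expansion are fine. But the contraction bound is the entire content of the theorem, and there you have a genuine gap: you correctly observe that your level-by-level induction does not close at the stated factor, and neither of your proposed repairs works as described. Quantitatively, if $1/c_i$ is the contraction your scheme can certify for pairs whose high-priority endpoint is at level $i$, then your case (a) forces $\tau_i\ge 1/c_i$ while case (b) forces $(1-\tau_i)/c_{i-1}-\tau_i\ge 1/c_i$; the best one can do is $\tau_i=1/(2c_{i-1}+1)$, giving the recursion $c_i\ge 2c_{i-1}+1$, i.e.\ contraction exponential in the level (not even $\Theta(i^2)$ is sustainable, since plugging $c_{i-1}=\Theta(i^2)$ back into case (b) makes the bound negative). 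So no level-dependent choice of $\tau$ can meet the target $1/(2i)$: the loss of a triangle-inequality detour through a point of $K_{i-1}$ is $\Theta(\Delta/i)$, while the slack between $1/(2(i-1))$ and $1/(2i)$ is only $\Theta(\Delta/i^2)$. Your second fix (Matou\v{s}ek-style random subsets at each level) abandons the stated construction, is not carried out, and also strains the budget, since the one-coordinate-per-new-point scheme already uses all $\beta(i)-\beta(i-1)$ coordinates available at level $i$.

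The idea you are missing is that no induction over levels is needed: the paper analyzes each pair globally. Fix $x_j$ and a lower-priority $y$, let $\Delta=d(x_j,y)$ and $\alpha_i=d(\{x_j,y\},S_i)$, so that $\alpha_0=\infty$ and $\alpha_{\chi_\beta(j)}=0$. A pigeonhole argument over the $\chi_\beta(j)$ levels produces an index $i$ with $\alpha_i-\alpha_{i+1}\ge\Delta/(2\chi_\beta(j))$ \emph{and} $\alpha_{i+1}\le\Delta/2-\Delta/(2\chi_\beta(j))$. Taking $z\in S_{i+1}\setminus S_i$ realizing $\alpha_{i+1}$, say closest to $x_j$, one has $d(y,z)>\Delta/2$, so in the single coordinate $d(\cdot,S_i\cup\{z\})$ (which is exactly the coordinate your construction assigns to $z$) the value at $x_j$ is $\alpha_{i+1}$ while the value at $y$ is at least $\min\{\alpha_i,\Delta/2\}$, yielding separation at least $\Delta/(2\chi_\beta(j))$ in one coordinate, with no compounding across levels. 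Without this (or an equivalent) argument, your proposal establishes the construction and the dimension bound but not the claimed prioritized distortion $2\cdot\chi_\beta(j)$.
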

	
Before presenting the proof of \Cref{thm:GeneralGraphPriorirtyDistortionEmbedding},
we provide some of the intuition behind it. 
Recall that the Fr\'echet embedding \cite{MatEmbedding13} (also called Kuratowski embedding)
is an embedding into $\ell_\infty^n$, 
where the $j$'th coordinate for a point $x$ is simply $x$'s distance to $x_j$. 
While this is an isometric embedding, 
every point is non-zero in $n-1$ coordinates. 
In order to obtain prioritized dimension, 
we will set the $j$ coordinate of a point $x$ to be its distance to the set
that contains $x_j$ together with all points $x_{j'}$ for sufficiently small $j'$ 
(where the value of $j'$ is determined by the function $\beta$).
This ``padding'' will ensure prioritized dimension, 
but also induce larger distortion as a function of $\beta$.

\begin{proof}[Proof of \Cref{thm:GeneralGraphPriorirtyDistortionEmbedding}]
We suggest that while inspecting the proof, it may be helpful for the reader to focus on the setting $\beta(i)=2^i$, wherein $\chi_\beta(j)=\lceil\log j\rceil$.
	Set $S_0=\emptyset$ and $S_i=\{x_j\mid j\le \beta(i)\}$. We define embedding $f$ by setting its $j$'th coordinate to be
	\[f_j(x):=d(x,S_{\chi_\beta(j)-1}\cup\{x_j\})~.\]
	Note that for every $j'$ such that $\chi_\beta(j')>\chi_\beta(j)$, $f_{j'}(x_j)=0$. 
	Note also that there may be many points $x_{j'}$ with $j'<j$ and yet $f_j(x_{j'})\ne 0$.
	Thus $x_j$ is non-zero only in the first $\beta(\chi_\beta(j))$ coordinates as required.
	
	Next we argue the prioritized distortion. It is clear that $f$ is Lipschitz. Consider a pair of vertices $x_j,y$.
	Set $\Delta=d(x_j,y)$, and $\alpha_i=d\left(\{x_j,y\},S_i\right)$.
	Then $\infty=\alpha_0>\alpha_1\ge\alpha_2\ge\dots\ge\alpha_{\chi_\beta(j)}=0$. 
We argue that there must be some index $i$ such that
$\alpha_{i+1}\le\min\{\alpha_{i},\frac\Delta2\}-\frac{\Delta}{2\chi_\beta(j)}$.
Suppose for contradiction otherwise (i.e. no such index exist). 
We argue by induction on $q\in[1,\chi_\beta(j)]$ that $\alpha_{\chi_{\beta}(j)-q}<q\cdot\frac{\Delta}{2\chi_{\beta}(j)}$. 
For the base case note that $0=\alpha_{\chi_{\beta}(j)}>\min\{\alpha_{\chi_{\beta}(j)-1},\frac{\Delta}{2}\}-\frac{\Delta}{2\chi_{\beta}(j)}$,
implying $\alpha_{\chi_{\beta}(j)-1}<\frac{\Delta}{2\chi_{\beta}(j)}$. 
For general $q$, using the induction hypothesis $q\cdot\frac{\Delta}{2\chi_{\beta}(j)}>\alpha_{\chi_{\beta}(j)-q}>\min\{\alpha_{\chi_{\beta}(j)-q-1},\frac{\Delta}{2}\}-\frac{\Delta}{2\chi_{\beta}(j)}$,
implying $\min\{\alpha_{\chi_{\beta}(j)-q-1},\frac{\Delta}{2}\}<(q+1)\cdot\frac{\Delta}{2\chi_{\beta}(j)}$
and thus $\alpha_{\chi_{\beta}(j)-q-1}<(q+1)\cdot\frac{\Delta}{2\chi_{\beta}(j)}$. 
Overall we conclude that 
$\alpha_{0}=\alpha_{\chi_{\beta}(j)-\chi_{\beta}(j)}<\chi_{\beta}(j)\cdot\frac{\Delta}{2\chi_{\beta}(j)}=\frac{\Delta}{2}$, a contradiction as $\alpha_{0}=\infty$.

Choose $z\in S_{i+1}$ such that $d(\{x_j,y\},z)=\alpha_{i+1}$, and suppose that $z=x_q$. 
Assume without loss of generality that
$d(x_j,z)=d(\{x_j,y\},z)=\alpha_{i+1}$,
and so 
$d(y,z)\ge d(x_j,y)-d(x_j,z) \ge \Delta - \frac{\Delta}{2} + \frac{\Delta}{2\chi_\beta(j)} >\frac\Delta2$. 
	It holds that $d\left(y,S_{i}\cup\{z\}\right)=\min\left\{ d(y,S_{i}),d(y,z)\right\} \ge\min\left\{ \alpha_{i},\frac{\Delta}{2}\right\}$.
	Thus	
\begin{align*}
	\left\Vert f(y)-f(x_{j})\right\Vert _{\infty} & \ge\left|f_{q}(y)_{\infty}-f_{q}(x_{j})\right|\\
	& =\left|d\left(y,S_{i}\cup\left\{ z\right\} \right)-d(x_{j},S_{i}\cup\left\{ z\right\} )\right|\\
	& \ge\left|\min\left\{ \alpha_{i},\frac{\Delta}{2}\right\} -\alpha_{i+1}\right|\ge\frac{\Delta}{2\chi_{\beta}(j)}~.
\end{align*}	
	Prioritized distortion $2\cdot\chi_\beta(j)$ follows. 
	\end{proof}
	\begin{corollary}\label{cor:GeneralGraphPriorirtyDistortionEmbedding}
	Given a metric space $(X,d)$ with priority ordering $X=\{x_1,\dots,x_n\}$, 
	\begin{enumerate}
	\item For every $t\in\N$, there is an embedding $f:X\rightarrow\ell_\infty$ with prioritized distortion $2\cdot \lceil\frac{\log j}{t}\rceil$ and  prioritized dimension $2^{t}\cdot j$.			
	\item There is an embedding $f:X\rightarrow\ell_\infty$ with prioritized distortion $2\cdot\lceil\log\log j\rceil$ and  prioritized dimension $j^2$.
\end{enumerate}
\end{corollary}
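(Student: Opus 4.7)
The plan is to invoke \Cref{thm:GeneralGraphPriorirtyDistortionEmbedding} twice, with two carefully chosen monotone functions $\beta$. In each case I just need to verify that with my choice of $\beta$, the quantities $\chi_\beta(j)$ and $\beta(\chi_\beta(j))$ produced by the theorem match the desired distortion and dimension bounds stated in the corollary (up to the stated rounding).

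For part (1), I take $\beta(i) := 2^{ti}$, which is clearly monotone. By definition, $\chi_\beta(j)$ is the smallest integer $i$ with $2^{ti}\ge j$, i.e., $i\ge \frac{\log j}{t}$, so $\chi_\beta(j)=\lceil \frac{\log j}{t}\rceil$. This immediately gives the prioritized distortion $2\cdot\lceil\frac{\log j}{t}\rceil$ promised by the theorem. For the dimension, I bound
\[
\beta(\chi_\beta(j)) \eq 2^{\,t\lceil\log j/t\rceil}\le 2^{\log j + t} \eq 2^t\cdot j,
\]
using $t\lceil\log j/t\rceil\le t(\log j/t+1)=\log j+t$. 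Thus the prioritized dimension is at most $2^t\cdot j$, as required.

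For part (2), I take $\beta(i):=2^{2^i}$, again monotone. Now $\chi_\beta(j)$ is the smallest integer $i$ with $2^{2^i}\ge j$, i.e., $2^i\ge \log j$, which is $i\ge \log\log j$, giving $\chi_\beta(j)=\lceil\log\log j\rceil$ and hence prioritized distortion $2\lceil\log\log j\rceil$. For the dimension,
\[
\beta(\chi_\beta(j)) \eq 2^{2^{\lceil\log\log j\rceil}} \le 2^{2^{\log\log j+1}} \eq 2^{2\log j} \eq j^2,
\]
which matches the bound in the corollary.

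Since both are straightforward substitutions into \Cref{thm:GeneralGraphPriorirtyDistortionEmbedding}, there is no real obstacle here; the only point to be mildly careful about is the ceiling that arises because $\chi_\beta$ is integer-valued, which is exactly why the dimension bounds become $2^t\cdot j$ and $j^2$ rather than $j$ and $j$ (i.e., we lose the at most one extra power-of-$\beta$ factor coming from the ceiling). Everything else is just arithmetic on logarithms.
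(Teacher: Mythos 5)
Your proposal is correct and matches the paper's proof essentially verbatim: the same two choices $\beta(i)=2^{t\cdot i}$ and $\beta(i)=2^{2^i}$, the same computation of $\chi_\beta(j)$, and the same ceiling-absorbing bounds giving dimension $2^t\cdot j$ and $j^2$. No differences worth noting.
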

\begin{proof}

The first case follow by choosing the function $\beta(i)=2^{t\cdot i}$. Here $\chi_\beta(j)=\lceil\log_{2^t} 
j\rceil=\lceil\frac{\log j}{t}\rceil$, and thus the prioritized distortion is $2\cdot \lceil\frac{\log j}{t}\rceil$ while the 
prioritized dimension is $\beta(\chi_{\beta}(j))=2^{t\cdot\lceil\frac{\log j}{t}\rceil}<2^{t+\log j}=2^{t}\cdot j$.

For the second case choose $\beta(i)=2^{2^i}$. Here $\chi_\beta(j)=\lceil\log\log j\rceil$, and thus the prioritized distortion is 
$2\cdot \lceil\log\log j\rceil$ and the prioritized dimension is $\beta(\chi_{\beta}(j))=2^{2^{\lceil\log\log j\rceil}}<2^{2\cdot2^{\log\log j}}=j^{2}$.
\end{proof}
Note that the first case implies prioritized distortion $2\cdot\lceil\log j\rceil$ and prioritized dimension $2j$.

\section{$\ell_p$ Spaces}\label{sec:lp}
In this section we consider representations of $\ell_p$ spaces. 
As these spaces are somewhat restricted, we focus on the $1+\eps$ distortion regime.
We begin with the upper bound for distance labeling. 
\begin{theorem}\label{thm:L2toLinftyPriorUB}
	For every $\eps>0$, $p\in[1,2]$ and $n$ points in $\ell_p$, there is a $(1+\eps)$-labeling scheme with label size $O(\eps^{-2}\log n)$. Furthermore, given a priority ordering $\pi$, there is a $(1+\eps)$-labeling scheme with prioritized label size $O(\eps^{-2}\log j)$.
\end{theorem}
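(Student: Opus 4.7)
The plan is to reduce both statements to the Euclidean setting and apply (prioritized) Johnson--Lindenstrauss. It is classical that for $p\in[1,2]$ the $1/2$-snowflake of $\ell_p$ embeds isometrically into Hilbert space: there exists a map $\phi:\ell_p\to\ell_2$ with $\|\phi(x)-\phi(y)\|_2^2=\|x-y\|_p$. Thus, preserving $\ell_2$-distances on $\phi(X)$ up to factor $1\pm\eps'$ preserves the original $\ell_p$-distances up to $(1+\eps')^2\le 1+3\eps'$, and we take $\eps'=\eps/3$ throughout.

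For the worst-case bound, apply the Johnson--Lindenstrauss lemma to $\phi(X)\subset\ell_2$ with accuracy $\eps'$, producing $\psi:\phi(X)\to\R^d$ with $d=O(\eps^{-2}\log n)$ such that $\|\psi(\phi(x))-\psi(\phi(y))\|_2=(1\pm\eps')\|\phi(x)-\phi(y)\|_2$. The label of $x$ is simply $\psi(\phi(x))\in\R^d$ (one word per coordinate of suitable precision), occupying $O(\eps^{-2}\log n)$ words; the decoder returns the squared $\ell_2$-distance between two labels, which is a $(1+\eps)$-approximation of $\|x-y\|_p$.

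For the prioritized bound, replace JL with the \emph{terminal} dimension reduction of \cite{NN18}, arranged in a nested (layered) fashion. Set $T_k=\{x_1,\dots,x_{2^k}\}$ for $k\ge 0$. The plan is to produce a single map $\Psi$ into a chain of subspaces $\R^{d_0}\subseteq\R^{d_1}\subseteq\cdots$ with $d_k=O(\eps^{-2}k)$, such that the truncation $\Psi^{(k)}$ to the first $d_k$ coordinates preserves $\ell_2$-distances between every point of $\phi(T_k)$ and every other point up to $1\pm\eps'$. One builds $\Psi$ inductively: at level $k$ append $O(\eps^{-2})$ new coordinates that correct distortion for the newly added terminals $T_k\setminus T_{k-1}$ via a local application of the terminal JL of \cite{NN18}. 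Define the label of $x_j$ to be $\Psi^{(k_j)}(\phi(x_j))$ where $k_j=\lceil\log j\rceil$, of size $O(\eps^{-2}\log j)$ words. Given labels $l(x_i),l(x_j)$ with $i\le j$ (hence $k_i\le k_j$), the decoder truncates both to their first $d_{k_i}$ coordinates---which, by the nested structure, is a prefix of each label---and invokes the terminal guarantee at level $k_i$ (since $x_i\in T_{k_i}$) to recover $\|\phi(x_i)-\phi(x_j)\|_2$ within $1\pm\eps'$, then squares.

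The main obstacle is not the JL step itself but ensuring \emph{consistency across priority levels}: independent terminal reductions per level would force each label to store several disjoint images. The nested construction enabled by \cite{NN18} guarantees that a single prefix of a common embedding simultaneously serves as a terminal reduction at every level, so the shorter (higher-priority) label is always a compatible prefix of the longer (lower-priority) one read by the decoder; this is precisely what lets the label size shrink to $O(\eps^{-2}\log j)$ rather than $O(\eps^{-2}\log n)$.
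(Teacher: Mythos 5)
Your worst-case argument and the snowflake reduction from $\ell_p$ to $\ell_2$ (embed, apply Johnson--Lindenstrauss, store the image, square at decoding time) match the paper's proof. The gap is in the prioritized part: you correctly identify the obstacle (independent terminal reductions per level would force each label to store several images), but your resolution rests on an unproven claim. You assert that one can build a single nested map $\Psi$ in which, at level $k$, appending only $O(\eps^{-2})$ \emph{new} coordinates makes the length-$d_k$ prefix a terminal embedding for all of $T_k=\{x_1,\dots,x_{2^k}\}$, ``via a local application of the terminal JL of \cite{NN18}.'' The result of \cite{NN18} provides no such incremental or prefix-compatible guarantee: it produces a fresh embedding for each terminal set, and the level-$k$ step must newly serve $|T_k\setminus T_{k-1}|=2^{k-1}$ terminals, for which a terminal JL map by itself already needs $\Omega(\eps^{-2}k)$ coordinates; nothing you cite shows that the first $d_{k-1}$ coordinates, chosen without knowledge of these terminals, can be reused so that $O(\eps^{-2})$ additional coordinates suffice. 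As written, this ``prefix-terminal JL'' is a new theorem that would itself require proof, so the prioritized bound does not follow.

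The paper sidesteps consistency entirely by changing the level schedule: take doubly-exponential sets $S_i=\{x_j\mid j\le 2^{2^i}\}$, apply \cite{NN18} \emph{independently} at each level to get $f_i$ with $O(\eps^{-2}\log|S_i|)=O(\eps^{-2}2^i)$ coordinates, and let the label of $x_j$ be the concatenation $f_0(x_j),\dots,f_{\lceil\log\log j\rceil}(x_j)$. The geometric sum of dimensions is $O(\eps^{-2}\log j)$, so storing several disjoint images is harmless. For a query on $x_j,x_{j'}$ with $j<j'$, both labels contain the level-$\lceil\log\log j\rceil$ image (the lower-priority point stores all levels up to its own), and since $x_j\in S_{\lceil\log\log j\rceil}$ the terminal guarantee applies to that pair. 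If you want to rescue your single-exponential, prefix-based scheme, note that the naive version (store all images for levels $T_k$, $k\le\lceil\log j\rceil$) gives only $O(\eps^{-2}\log^2 j)$, which is exactly why the paper switches to doubly-exponential levels rather than attempting the nesting you propose.
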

\begin{proof}[Proof of \Cref{thm:L2toLinftyPriorUB}]
	We begin by constructing a labeling scheme for a set $X$ on $n$ points in $\ell_2$. Then we will generalize the result to $\ell_p$ for $p\in [1,2]$.
	
	As a consequence of the Johnson Lindenstrauss lemma \cite{JL84}, there is an embedding $f:X\rightarrow\ell_2^{O(\eps^{-2}\log n)}$ with $1+\eps$ distortion. By simply storing $f(x)$ as the label of $x\in X$, we obtain a $1+\eps$ labeling scheme with $O(\eps^{-2}\log n)$ label size.
	
	Next we consider a set $X$ with priority ordering $\pi=\{x_1,x_2,\dots,x_n\}$. 
	Narayanan and Nelson \cite{NN19} constructed a terminal version of the JL transform: 
	Specifically, given a 
	set $K$ of $k$ points in $\ell_2$ there is an embedding $f$ of the entire $\ell_2$ space into $\ell_2^{O(\eps^{-2}\log k)}$  such that for every $x\in K$ and $y\in \ell_2$, $\|f(x)-f(y)\|_2=(1\pm\eps)\|x-y\|_2$.
	
	For $i=0,1,\dots \lceil\log\log n\rceil$, set $S_i=\{x_j\mid j\le 
2^{2^{i}} \}$. Let $f_i:X\rightarrow\ell_2^{O(\log|S_i|)}$ be a terminal JL transform w.r.t.\ $S_i$. The label of $x_j$ will consist of $f_0(x_j),f_1(x_j),\dots,f_{\lceil\log\log j\rceil}(x_j)$.
	Given a query on $x_j,x_{j'}$, where $j<j'$, our answer will be $\|f_{\lceil\log\log j\rceil}(x_j)-f_{\lceil\log\log j\rceil}(x_{j'})\|_2$.
	The distortion follows as $x_j\in S_{\lceil\log\log j\rceil}$ (hence \cite{NN19} guarantees $\|f_{\lceil\log\log j\rceil}(x_j)-f_{\lceil\log\log j\rceil}(x_{j'})\|_2=(1\pm\eps)\|x_{j}-x_{j'}\|_{2}$).
	The length of the label of $x_j$ is bounded by
	\[
	\sum_{i=0}^{\lceil\log\log j\rceil}O(\eps^{-2}\log|S_i|)=O(\eps^{-2})\cdot \sum_{i=0}^{\lceil\log\log j\rceil}2^i=O(\eps^{-2})\cdot 2^{\lceil\log\log j\rceil+1}=O(\eps^{-2}\cdot\log j)~,
	\]
	words, as required.
	
	To generalize the labeling schemes to $\ell_p$ for $p\in[1,2]$, we note that 
	every $p\in[1,2]$, $\ell_p$ embeds isometrically into squared-$L_2$,
	or equivalently, the snowflake of $\ell_p$ embeds into $L_2$ (see e.g. \cite{DL97}). Specifically, for a set $X\subseteq\ell_p$, there is a function $f_X:X\rightarrow\ell_2$, such that for every $x,y\in X$, $\|x-y\|_p=\|f(x)-f(y)\|_2^2$.
	Then a labeling scheme for $\ell_2$ implies the same performance for $\ell_p$ as well,
	the only change being that the computed distances must be squared.
\end{proof}

Next we turn our attention to lower bounds. Every $n$-point set in $\ell_2$ embeds isometrically into any other $\ell_p$ space, for $p\in[1,\infty]$ (see e.g. \cite{MatEmbedding13}).
This implies that any lower bound that we prove for $\ell_2$ will holds as well for any other $\ell_p$ space (as the hard example will reside in $\ell_p$ as well).
\begin{theorem}\label{thm:LBL2toLinfty}
	For every $p\in[1,\infty]$ and $n\in\N$, there is a set $A$ of $2n$ points in $\ell_p$, such that every embedding of $A$ into $\ell_\infty$ with distortion smaller than $2^{\max\{\frac12,1-\frac1p\}}$ has dimension at least $n$.
\end{theorem}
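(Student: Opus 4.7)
The plan is to take $A$ to be the vertex set of the cross-polytope, i.e., the $2n$ points $\{+e_1,-e_1,\ldots,+e_n,-e_n\}$, realized in $\ell_p$ in one of two ways depending on $p$. For $p \ge 2$, I would place $A$ directly in $\ell_p^n$, where the pairwise distances are $\|+e_i-(-e_i)\|_p = 2$ (``antipodal'' pairs) and $\|{\pm e_i \mp e_j}\|_p = 2^{1/p}$ for $i \ne j$ (``non-antipodal'' pairs), giving max-to-min ratio $2^{1-1/p}$. For $p\in[1,2]$ I would first place $A$ in $\ell_2^n$ (where the two distance values are $2$ and $\sqrt{2}$), and then invoke the fact cited just before the theorem, that every finite subset of $\ell_2$ embeds isometrically into $\ell_p$; this preserves both distances and yields ratio $\sqrt{2}$. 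In either regime, the max-to-min distance ratio of $A$ is exactly $\rho := 2^{\max\{1/2,\, 1-1/p\}}$.

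Given any embedding $f : A \to \ell_\infty^d$ with distortion $c < \rho$, after rescaling I assume $f$ is non-contractive with expansion at most $c$. Let $\Delta$ denote the antipodal distance in $A$ and $\delta = \Delta/\rho$ the non-antipodal distance. Call a coordinate $k \in [d]$ a \emph{witness} for antipodal pair $i$ if $|f_k(+e_i) - f_k(-e_i)| \ge \Delta$. Each of the $n$ antipodal pairs must have at least one witness (by non-contraction), so it suffices to show that each coordinate witnesses at most one antipodal pair; this immediately gives $d \ge n$.

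The heart of the argument, and the only nontrivial step, is this ``uniqueness'' claim, which I would prove by contradiction. Suppose some coordinate $k$ witnesses both pairs $i \ne j$. The key input is that each of the four non-antipodal pairs among $\{\pm e_i, \pm e_j\}$ has $\ell_\infty$-distance at most $c\delta < \Delta$. After relabelling so that $f_k(+e_i) \ge f_k(-e_i) + \Delta$, I split on the sign of $f_k(+e_j) - f_k(-e_j)$. If this difference is $\ge \Delta$, combining the two witness inequalities with the non-expansion bounds on the pairs $(+e_i,-e_j)$ and $(+e_j,-e_i)$ yields both $f_k(-e_j) > f_k(-e_i)$ and $f_k(-e_i) > f_k(-e_j)$, a contradiction. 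If instead the difference is $\le -\Delta$, the pairs $(+e_i,+e_j)$ and $(-e_i,-e_j)$ analogously yield $f_k(+e_j) > f_k(-e_i)$ and $f_k(-e_i) > f_k(+e_j)$, again a contradiction. The theorem follows.
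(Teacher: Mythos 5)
Your proposal is correct and follows essentially the same route as the paper: the same antipodal set $A=\{\pm e_1,\dots,\pm e_n\}$, the same use of the isometric embedding of $\ell_2$ subsets into $\ell_p$ to cover the $\sqrt{2}$ regime, and the same core claim that a single coordinate can witness at most one antipodal pair. Your two-case sign analysis is simply an explicit write-out of the ``by case analysis'' interval argument the paper invokes, so there is no substantive difference.
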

\begin{proof}[Proof of \Cref{thm:LBL2toLinfty}]
	Set $A=\{e_1,-e_1,e_2,-e_2,\dots,e_n,-e_n\}$, the standard basis and its \emph{antipodal} points (here $\{e_i,-e_i\}$ is an 
\emph{antipodal pair}). 
	Fix $p$, and we will prove that every embedding of $A\subseteq\ell_p$ with distortion smaller than $2^{1-\frac1p}$ into $\ell_\infty$ requires  
	at least $n$ coordinates. As mentioned above, the lower bound for $p=2$ holds for all $\ell_p$ as well; thus the theorem will follow.

	\sloppy We argue that each coordinate can satisfy at most a single antipodal pair. As there are $n$ such pairs, the lower bound follows. 	
	Consider a single coordinate $f:A\rightarrow\R$. Assume by way of contradiction that there are $e_i,-e_i,e_j,-e_j\in A$ ($i\ne j$) such that $2\le 
	\left|f(e_i)-f(-e_i)\right|,\left|f(e_j)-f(-e_j)\right|$.
	As $f(e_i),f(-e_i),f(e_j),f(-e_j)\in \R$, by case analysis there must be a pair consisting of one point from $\{f(e_i),f(-e_i)\}$, and one point from $\{f(e_j),f(-e_j)\}$ at distance at least
	$\min\left\{\left|f(e_i)-f(-e_i)\right|,\left|f(e_j)-f(-e_j)\right|\right\}\ge2$. But the actual distance between this pair is only $2^{\frac1p}$. Thus $f$ has distortion $\frac{2}{2^{\nicefrac1p}}=2^{1-\frac1p}$, a contradiction.	
\end{proof}	

Note that \Cref{thm:LBL2toLinfty} implies a lower bound of $\Omega(j)$ on the prioritized dimension of an embedding 
from $\ell_p$ into $\ell_\infty$, with distortion smaller than $\sqrt{2}$.
Next, for distortion $1+\eps$ we prove a stronger lower bound with exponential dependency on $\eps$.

\begin{theorem}\label{thm:L2toLinftyPriorLB}
	For every $\eps\in(0,1)$ and $p\in[1,\infty]$ there is a set of points in $\ell_p$ and a priority ordering, 
such that every embedding of them into $\ell_\infty$ with distortion $1+\eps$ has prioritized dimension at least $j^{\frac{1}{6\eps}}$.
\end{theorem}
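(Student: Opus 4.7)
The plan is to exhibit, for each $\eps\in(0,1)$ and $p\in[1,\infty]$, a finite point set $A\subseteq\ell_p$ with a priority ordering, such that any $(1+\eps)$-distortion embedding $f:A\to\ell_\infty$ has $|\supp f(x_j)|\ge j^{1/(6\eps)}$ for some priority-$j$ point. Since every $\ell_2$ point set embeds isometrically into every $\ell_p$, $p\in[1,\infty]$ (as used in the proof of \Cref{thm:LBL2toLinfty}), it suffices to construct the hard instance inside $\ell_2$; viewed as a subset of $\ell_p$, it will inherit the same lower bound.

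I would use a multi-scale antipodal construction. Fix $L=\Theta(1/\eps)$ ``levels'', and at level $\ell=1,\dots,L$ place $n_\ell$ antipodal pairs $\{\pm v^{(\ell)}_i\}_{i\le n_\ell}$ on disjoint coordinate blocks of $\ell_2$, with geometrically separated scales $r_1\ll\dots\ll r_L$ (each ratio $r_{\ell+1}/r_\ell$ dominating any constant power of $1+\eps$ that we will need later). Tune the fan-outs $n_\ell$ and list the pairs in priority order level by level, so that a point introduced at level $\ell$ receives a priority $j\approx(\text{fan-out})^{\ell}$; this will make $\ell(j)\approx \log j/(6\eps)$, which is the ``depth'' we need $x_j$ to sit at.

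Given such an embedding $f$, I would generalise the single-coordinate argument from \Cref{thm:LBL2toLinfty} in two stages. First, a single coordinate of $f$ can satisfy at most one antipodal pair at any fixed scale: satisfying $\{v,-v\}$ at scale $r$ forces $|f_i(v)-f_i(-v)|\ge 2r/(1+\eps)$, but the Lipschitz bound $(1+\eps)\cdot 2^{1/p} r$ on cross-distances at the same scale leaves an interval too narrow to host another same-scale antipodal separation. Second, the scale separation built into the construction prevents a single coordinate from serving antipodal pairs across many levels, again via the expansion bound combined with the triangle inequality. Combining these two per-coordinate bounds with the number of antipodal pairs that a level-$\ell(j)$ point must be separated from yields a support lower bound for $x_j$ that, with the tuned parameters, is exactly $j^{1/(6\eps)}$.

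The main obstacle is the cross-scale coordinate accounting: one has to show that no single coordinate can simultaneously serve many antipodal pairs across distinct levels, which requires the correct choice of the scale ratios $r_{\ell+1}/r_\ell$ and a careful triangle-inequality argument against the expansion bound. The constant $6$ in the exponent records the $2$, $2^{1/p}$ and $(1+\eps)$ ratios from the single-scale analysis together with the scale-separation factor, and must be followed precisely throughout the argument.
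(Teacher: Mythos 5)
Your per-coordinate, per-scale claim is fine (it is essentially \Cref{thm:LBL2toLinfty}), but the construction you build around it does not yield any prioritized lower bound, because it lacks the one mechanism such a bound needs: a reason why a coordinate that \emph{vanishes on the high-priority points} cannot satisfy the low-priority antipodal pairs. With your levels placed on disjoint coordinate blocks, this blocking property is provably absent. For an antipodal pair $\{v,-v\}$ at level $\ell$ and any point $y$ of another level, one has $\|v-y\|+\|y-(-v)\| = 2\sqrt{r_\ell^2+r_{\ell'}^2} \ge 2r_\ell = \|v-(-v)\|$, so the triangle-inequality argument against the $(1+\eps)$ expansion bound gives nothing. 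Worse, your instance actually admits a $(1+\eps)$-distortion embedding with prioritized dimension polynomial in $j$ with an $\eps$-independent exponent: the linear functionals $x\mapsto\langle x,e_i^{(\ell)}\rangle$ and $x\mapsto\langle x,(e_i^{(\ell)}\pm e_{i'}^{(\ell)})/\sqrt2\rangle$ are $1$-Lipschitz, are identically zero on every other block (hence on all higher-priority levels), realize all within-level distances exactly, and—once the scale ratios $r_{\ell+1}/r_\ell$ are large, as you require—realize every cross-level distance $\sqrt{r_\ell^2+r_{\ell'}^2}\approx r_{\max(\ell,\ell')}$ within $1+\eps$ as well. So no bound of the form $j^{1/(6\eps)}$ can come out of this construction, and your second claim (limiting how many levels one coordinate can serve) would anyway only bound the total dimension, not the prioritized one.

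The paper's proof works at a \emph{single} scale and gets the exponential dependence on $1/\eps$ from a packing/entropy gap rather than from a multi-scale recursion. It takes a symmetric subset $A$ of the Hamming cube $\{\pm1\}^n$ in which any two points differ in more than $\eps'n$ coordinates ($\eps'=3\eps$), so $|A|\ge 2\cdot 2^{n/2}$, and a small high-priority set $Y=\{\pm1\}^{\eps'n}\{0\}^{(1-\eps')n}$ of size $2^{\eps'n}$. Crucially $Y$ is \emph{correlated} with $A$ (shared sign pattern on the first $\eps'n$ coordinates), so for each $x\in A$ there is $y\in Y$ with $\|x-y\|=\sqrt{(1-\eps')n}$, and collapsing the pair $y,-y$ to $0$ in a coordinate forces $|f(x)-f(-x)|\le(1+\eps)\cdot2\sqrt{(1-\eps')n}<\sqrt{4n}$: such a coordinate satisfies no antipodal pair of $A$. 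Combined with the one-pair-per-coordinate argument, at least $|A|/2\ge 2^{n/2}=|Y|^{1/(6\eps)}$ coordinates must be non-zero on $Y$, which is exactly what contradicts prioritized dimension $j^{1/(6\eps)}$ when $Y$ occupies the top priorities. If you want to salvage your plan, the high-priority points must be chosen as near-geodesic ``partial copies'' of the low-priority pairs (as $Y$ is), not on orthogonal blocks, and you need exponentially many well-separated pairs blocked by a small set—at which point you have reconstructed the paper's argument.
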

\begin{proof}
	As above, we may assume that $p=2$. Furthermore, we will assume that $\eps<\frac16$, as otherwise a better lower bound follows from \Cref{thm:LBL2toLinfty}.
	Let $n$ be large enough, and $H_n=\{\pm1\}^n\subseteq\ell_2^n$ be the Hamming cube. 
	We start by creating a symmetric subset $A\subset H_n$ (i.e. $A=-A$), such that all the points in $A$ differ in more than $\eps'n$ coordinates, for $\eps'=3\eps$.
	The set $A$ is created in a greedy manner. Initially set $S=H_n$ and $A=\emptyset$. First pick an arbitrary pair $x,-x\in S$ from $S$ and add them to $A$. 
Delete from $S$ all the points that differ in fewer than $\eps'\cdot n$ coordinates from either $x$ or $-x$. Note that when $y\in S$ is deleted, so is its 
\emph{antipodal} point $-y$. Thus, both $S,A$ are maintained to be symmetric.
	We continue with this process until $S$ is empty.
	The number of points that differ by at most $\eps'n$ coordinates from any point $v\in H$ is $\sum_{i=0}^{\eps'n}{n \choose i}\le {n\choose 
\eps'n}(1+\frac{\eps'n}{n-2\eps'n+1})< 2{n\choose \eps'n}$. Therefore for each added vertex we deleted fewer than $2{n\choose \eps'n}$ points. We conclude that the size of $A$ is at least
	\begin{eqnarray}
	|A|\ge\frac{2^{n}}{2\cdot{n \choose \eps'n}}\ge\frac{1}{2}\cdot\frac{2^{n}}{\left(\frac{en}{\eps'n}\right)^{\eps'n}}=\frac{1}{2}\cdot2^{\left(1-\eps'\log\frac{e}{\eps'}\right)n}>2\cdot2^{\frac n2}~.\label{eq:AsizeLB}
	\end{eqnarray}
	We argue that an embedding $f$ of $A$ into $\R$ can satisfy at most a single antipodal pair $x,-x$. 
	Indeed, assume by way of contradiction that there is $f:A\rightarrow\R$ and $x,y\in A$ such that 
	$\sqrt{4n}\le|f(x)-f(-x)|,|f(y)-f(-y)|\le\left(1+\eps\right)\sqrt{4n}$.
	Similar to the proof of \Cref{thm:LBL2toLinfty}, by case analysis, there must be a pair $z\in\{x,-x\}$ and $w\in\{y,-y\}$ such that
	$\left|f(z)-f(w)\right|\ge\min\left\{\left|f(x)-f(-x)\right|,\left|f(y)-f(-y)\right|\right\}\ge\sqrt{4n}$.
	As both $x,-x$ differs from both $y,-y$ in more than $\eps'n$ coordinates,  $z$ coincides with $w$ in at least $\eps'n$ 
coordinates. In particular $\|z-w\|_{2}\le\sqrt{(1-\eps')\cdot4n}$ .
	Thus $f$ has distortion at least $\frac{\left|f(z)-f(w)\right|}{\|z-w\|_{2}}\ge\frac{\sqrt{4n}}{\sqrt{(1-\eps')\cdot4n}}>1+\eps$, a contradiction.

	Next, let $Y=\{\pm1\}^{\eps'n}\{0\}^{(1-\eps')n}$ be the set of all points that attain values $\{\pm1\}$ in the first $\eps'n$ coordinates, with all other coordinates 0.
	Consider a coordinate $f:X\rightarrow \R$ that sends all of $Y$ to $\vec{0}$. We argue that $f$ will not satisfy any antipodal pair in $A$. Indeed, consider an antipodal pair $x,-x$. Let $y\in Y$ be the point agreeing with $x$ on the first $\eps' n$ coordinates and $0$ everywhere else. 
	It holds that
	\begin{align*}
	\left|f(x)-f(-x)\right| & \le\left|f(x)-f(y)\right|+\left|f(y)-f(-y)\right|+\left|f(-y)-f(-x)\right|\\
	& \le\left(1+\eps\right)\left(\left\Vert x-y\right\Vert _{2}+0+\left\Vert (-x)-(-y)\right\Vert _{2}\right)\\
	& =\left(1+\eps\right)\cdot2\cdot\sqrt{\left(1-\eps'\right)n}<\sqrt{4n}~.
	\end{align*}
	As each coordinate can satisfy at most a single antipodal pair from $A$, we conclude that every $1+\eps$ embedding of $X$ into $\ell_\infty$ must be non-zero 
on $Y$ in at least $|A|/2$ coordinates.

	We can now conclude the proof: Assume by way of contradiction that for any set in $\ell_2$ there is a $1+\eps$ embedding into $\ell_\infty$ with prioritized dimension $j^{\frac{1}{6\eps}}$.
	Set priority $\pi$ for $X = A \cup Y$ with the points in $Y$ occupying the first $|Y|$ places. 
By our assumption, there is a $1+\eps$ embedding where the points of $Y$ are non-zero only in the first 
	\[
	|Y|^{\frac{1}{6\eps}}=\left(2^{\eps'n}\right)^{\frac{1}{2\eps'}}=2^{\frac{n}{2}}\overset{(\ref{eq:AsizeLB})}{<}\frac{|A|}{2}
	\]
	coordinates. Thus the embedding cannot satisfy all the pairs in $A$, a contradiction.
\end{proof}

\section{Trees}
In this section, we present an embedding of trees into $\ell_\infty$ with  prioritized dimension $O(\log j)$.
We begin by sketching the classic isometric embedding of trees into $\ell^{O(\log n)}_\infty$ due to \cite{LLR95}.
This embedding utilizes a balanced decomposition (a technique also used in distance labelings for trees):
First, identify a separator vertex $s$ among the vertex set $V$, such that we can decompose $T$ into two trees $T_1,T_2$, each containing at most $\frac23n+1$ vertices, where $T_1\cap T_2=\{s\}$.
Now create a new coordinate wherein each vertex $v\in T_1$ assumes value $d(v,s)$,
while each vertex $x\in T_2$ assumes value $-d(x,s)$. This coordinate satisfies all pairwise distances $T_1\times T_2$.
Recursively (and separately) embed $T_1$ and $T_2$ into $\ell_\infty$, recalling that each has its own copy of $s$.
The two embeddings are then merged by translating $T_2$ so that its copy of $s$ is mapped to the same vector assumed by
the copy of $s$ in $T_1$.

Given a priority ordering on the vertices $v_{1},v_{2},\dots,v_{n}$,
our goal is to create an isometric embedding into $\ell_{\infty}$ with prioritized
dimension $O(\log j)$.
A natural first step would be to devise a terminal embedding:
Given terminal set $K$, embed $T$ into $\ell_\infty^{O(\log |K|)}$ while preserving all pairwise distances $K\times V$.
A terminal embedding can be constructed following the lines of the classic embedding by modifying the separator decision rule,
and ensuring that after $O(\log |K|)$ recursive steps each terminal is found in a different subtree.
However, a terminal embedding of this type is too weak to yield a prioritized embedding, since the mapping of all terminals into $\vec{0}$
(subsequent to their first $O(\log k)$ non-zero coordinates)
interferes with the distances between non-terminal pairs.

To circumvent this problem, we shall ``fold'' the terminals one above the other,
until ultimately all terminals will fall on a single representative vertex (see \Cref{lem:terminalTree}).
During such a folding, some of the non-terminal vertices will fold upon each other as
well, but our terminal embedding will be sufficiently robust to ensure that their
distances are retained. We will then use this result on terminal embeddings of trees
into $\ell_\infty$ (Lemma \ref{lem:terminalTree}) to derive the stronger result,
priority embeddings of trees into $\ell_\infty$ (Theorem \ref{thm:TreePriorLinfty}).

\subsection{Terminal Lemma}
\begin{lemma}\label{lem:terminalTree}
        Given a weighted tree $T=(V,E,w)$ and a
        set $K$ of $k$ terminals, there exist a pair of embeddings $f:T\rightarrow\ell_{\infty}^{O(\log k)}$
        and $g:T\rightarrow\mathcal{T}$ (into another weighted tree $\mathcal{T}$) such that the following
        properties hold:
        \begin{enumerate}
                \item[(1)] Lipschitz: For every $x,y\in V$, $\|f(x)-f(y)\|_{\infty}\le d_{T}(x,y)$
                and $d_{\mathcal{T}}(g(x),g(y))\le d_{T}(x,y)$.
                \item[(2)] Preservation: For every $x,y\in V$, either $\|f(x)-f(y)\|_{\infty}=d_{T}(x,y)$
                or $d_{\mathcal{T}}(g(x),g(y))=d_{T}(x,y)$, or both.
                \item[(3)] Terminal Collapse: $g$ maps all of $K$ into a single vertex, i.e.\
                $|g(K)|=1$.
        \end{enumerate}
\end{lemma}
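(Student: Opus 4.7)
The plan is to prove the lemma by induction on $|K|$, adapting the classical LLR separator recursion sketched in the introduction, with two key modifications: (i) the separator is chosen to be terminal-balanced, and (ii) at every recursive call, the separator vertex $s$ is added to the terminal sets of \emph{both} resulting subtrees. For the base case $|K|\le 1$ take $\mathcal{T}:=T$, $g:=\mathrm{id}$, and $f:=\vec 0$; all three properties hold trivially. In the inductive step, locate a vertex $s$ such that the components of $T\setminus\{s\}$ can be grouped into two sides $T_1,T_2$ with $T_1\cap T_2=\{s\}$ and $|K\cap T_i|\le \tfrac23|K|$, set $K_i:=(K\cap T_i)\cup\{s\}$, and apply the induction on $(T_i,K_i)$ to obtain $f_i:T_i\to\ell_\infty^{O(\log|K_i|)}$ and $g_i:T_i\to \mathcal{T}_i$ with $g_i(K_i)=\{t_i\}$.

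For $f$, prepend a new ``separator coordinate'' sending $v\in T_1$ to $d(v,s)$ and $v\in T_2$ to $-d(v,s)$, and merge the remaining coordinates of $f_1$ and $f_2$ into shared dimensions using the translation trick from the LLR sketch (shift $f_2$ by the constant vector $f_1(s)-f_2(s)$ so that the two embeddings agree at $s$). This yields dimension $1+\max(\dim f_1,\dim f_2)$, and the recursion $D(k)\le 1+D(\tfrac23 k+1)$ solves to $O(\log k)$. For $g$ and $\mathcal{T}$, fold the two recursive trees by identifying $t_1$ with $t_2$ into a single vertex $t$, and set $g(v):=g_i(v)$ for $v\in T_i$; this is well-defined at $v=s$ because $s\in K_i$ forces $g_i(s)=t_i=t$. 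Terminal collapse then follows at once, since every point of $K\cap T_i$ is mapped by $g_i$ to $t$.

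For verification, cross-pairs $x\in T_1$, $y\in T_2$ are captured exactly by the separator coordinate, since $|f_{\mathrm{sep}}(x)-f_{\mathrm{sep}}(y)|=d(x,s)+d(s,y)=d_T(x,y)$. Same-subtree pairs inherit preservation from the induction, and this lifts upward: the added coordinate and the LLR translation do not expand $f_i$-distances, and because gluing two trees at a single vertex produces no shortcut, each $\mathcal{T}_i$ sits isometrically inside $\mathcal{T}$. The delicate step -- and the reason for adjoining $s$ to both $K_i$'s -- is Lipschitzness of $g$ across the separator: the unique path in $\mathcal{T}$ between $g(x)$ and $g(y)$ passes through $t$, giving
\[
d_{\mathcal{T}}(g(x),g(y))=d_{\mathcal{T}_1}(g_1(x),g_1(s))+d_{\mathcal{T}_2}(g_2(s),g_2(y))\le d_{T_1}(x,s)+d_{T_2}(s,y)=d_T(x,y),
\]
by the inductive Lipschitz bound on $g_1$ and $g_2$. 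Without forcing $s\in K_i$, we would only get an upper bound in terms of the distance from $x,y$ to the nearest \emph{original} terminal, which can be arbitrarily larger than the distance to the separator -- this is the main technical obstacle that the ``$+\{s\}$'' trick is engineered to overcome.
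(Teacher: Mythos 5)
Your inductive framework (terminal-balanced separator, one LLR coordinate per level, gluing the two recursive trees at the separator's image) matches the paper's skeleton, but there is a genuine gap at the point where all the real work has to happen: your recursion never actually merges two terminals that lie at positive distance. Adding $s$ to both terminal sets gives subproblems with $|K_i|=|K\cap T_i|+1\le \tfrac23 k+1$, and for $k\le 3$ this is \emph{not} smaller than $k$; in particular for $k=2$ each side again has two terminals (one original terminal plus $s$), so the induction on $|K|$ is not well-founded and the dimension recursion $D(k)\le 1+D(\tfrac23 k+1)$ has no anchor (at best you can induct on tree size, but then the depth -- and hence the dimension -- is governed by $n$, not $k$, destroying the $O(\log k)$ bound). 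Worse, the collapse itself never occurs: your only base case is $|K|\le 1$ with $g=\mathrm{id}$, and the gluing step only identifies $g_1(s)$ with $g_2(s)$, i.e.\ two images of the \emph{same} vertex $s$. By induction every map $g$ you build is injective (an isometric copy of $T$), so property (3) can never hold once $K$ contains two distinct terminals.

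What is missing is precisely the two-terminal case that the paper treats explicitly: take the midpoint $c$ of the path $P$ between $t_1$ and $t_2$ (subdividing an edge if necessary), ``fold'' $P$ around $c$ so that points of $P$ equidistant from $c$ are identified (making $t_1=t_2$), and pair this with the single coordinate $f(v)=\pm d_T(v,c)$ (sign according to the side of $c$), which exactly preserves every pair whose distance the folding destroys, while same-side pairs keep their distance in the folded tree. With that construction in hand, your ``$+\{s\}$'' organization could in principle be repaired (e.g.\ by treating $k=2$ as a base case via folding, or, as the paper does, by recursing with $K_i=K\cap T_i$ and then invoking the two-terminal folding on the glued tree to merge the two collapsed terminal vertices, at the cost of one extra coordinate per level). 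As written, though, the proposal delegates the collapse entirely to an inductive hypothesis that bottoms out in the identity map, so the central mechanism of the lemma is absent.
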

\begin{proof}
We may assume that all terminals of $K$ are leafs, as otherwise we can simply add a dummy vertex in place of each terminal,
and connect the terminal to the dummy vertex with an edge of weight $0$.
        The proof is by induction on $k$.

\paragraph{Base cases.} For the case $k=1$ we can just return the tree as is, along with the null embedding into $\ell_{\infty}$.
        Next we prove the case of $k=2$.
        Denote the two terminals by $t_{1},t_2$,
and let $P$ be the unique path in $T$ connecting $t_{1},t_2$.
Let $c \in V$ be the midpoint of $t_{1}$ and $t_2$, such that $d_{T}(t_1,c)=d_{T}(t_2,c)$.
(If $c$ does not exist in $V$, then add $c$ to $V$, and split the corresponding middle edge into two new edges joined at $c$.)
Now ``fold'' $P$ around $c$. That is, create a new tree $\mathcal{T}$,
where path $P$ is replaced by a new path that ends at $c$, and every $x \in P$ is found on the new path at distance exactly
$d_T(x,c)$ from $c$. Any pair of points in $P$ equidistant from $c$ are merged -- and in particular $t_1$ and $t_2$ are now
the same point, which is the other endpoint of the new path. All the other edges and vertices remain the same. As a result, we obtain an embedding $g:d_{T}\rightarrow\mathcal{T}$ (see \Cref{fig:folding} for an illustration).
It is clear that $g$ is Lipschitz, and moreover $|g(\{t_1,t_2\})|=1$.

        \begin{figure}[t]
                \centering{\includegraphics[scale=0.85]{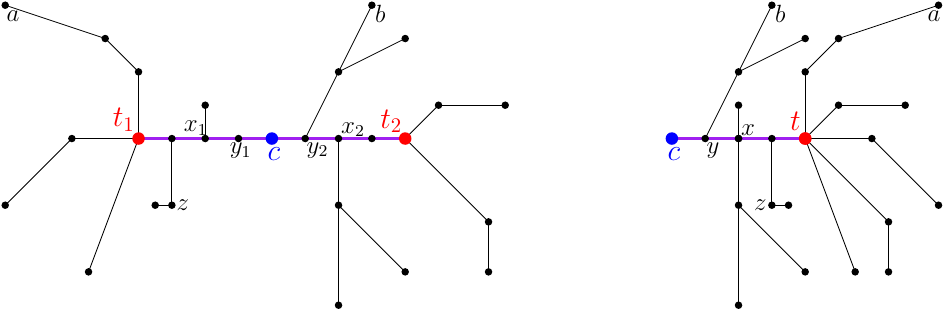}}
                \caption{\label{fig:folding}\small \it
                        On the left is illustrated the tree $T$ with two terminals $t_1$,$t_2$. The path $P$ between the terminals is colored in purple. The
(possibly imaginary) vertex $c$ lies at the midpoint of $t_1$ and $t_2$.
                        On the right is illustrated the tree $\mathcal{T}$ which is obtained by ``folding'' the path $P$ around $c$. \newline
                        In this example, all the edges in $T$ are of unit weight, except for the edge $\{y_1,y_2\}$ that has weight $2$.
The values of the function $f:T\rightarrow\R$ (eq. (\ref{eq:defFoldingFunc})) are: $f(t_1)=4, f(t_2)=-4, f(a)=7, f(b)=-3, f(x_1)=2, f(x_2)=-2, f(z)=4$.}
        \end{figure}

Having specified the function $g$, we now describe the function $f$:
separate $T$ into two trees $T_1$,$T_2$ where $T_1\cap T_2=\{c\}$. 
Set the function $f:V\rightarrow \mathbb{R}$ as follows.
        \begin{eqnarray}
        f(v)=\begin{cases}
        d_{T}(v,c) & v \in T_1\\

        -d_{T}(v,c) & v \in T_2\setminus\{c\}
        \end{cases}~~~.
        \label{eq:defFoldingFunc}
        \end{eqnarray}
        See \Cref{fig:folding} for an illustration of function $f$. We argue that $f$ is Lipschitz:
Consider a pair of vertices $u,v$. If $u,v \in T_i$ (for some $i$), then by the triangle inequality
        $|f(u)-f(v)|=|d_{T}(u,c)-d_{T}(v,c)|\le d_{T}(u,v)$.
        Otherwise, assume without loss of generality that $u \in T_1$ while $v \in T_2$.
The shortest path from $u$ to $v$ must pass through $c$, thus
        \begin{eqnarray}
        |f(u)-f(v)|=|d_{T}(u,c)+d_{T}(v,c)|=d_{T}(u,v)~.
        \label{eq:vPuPdiff}
        \end{eqnarray}
        It remains only to prove the second property (preservation).
Consider a pair of vertices $u,v$. If $u \in T_1$ and $v \in T_2$, then by equation (\ref{eq:vPuPdiff}) $|f(u)-f(v)|=d_{T}(u,v)$.
Otherwise, if $u,v \in T_i$, the shortest path between $u$ and $v$ in $T$ is isomorphic to the shortest path in $\mathcal{T}$,
and so $d_{\mathcal{T}}(u,v)=d_{T}(u,v)$ as required.

\paragraph{Induction step.}
For $k>2$ terminals, we will assume by induction that for every tree with $k'<k$
terminals there are embeddings $f,g$ as required above, such that $f$ uses at most $a\cdot\log k'$ coordinates, for $a=\frac{2}{\log(3/2)}$.
Consider a tree $T$, and a terminal set $K$ of size $k$.
Let $s\in V$ be a separator vertex, such that $T$ can be separated into two trees $T_1$,$T_2$ where $T_1\cap T_2=\{s\}$, and each $T_i$ contains at most $\frac{2}{3}k$ terminals.
        As all the terminals are leafs, $s\notin K$.
        Create a single new coordinate $h^{s}:V\rightarrow\mathbb{R}$ defined as follows
        \[
        h^{s}(x)=\begin{cases}
        d_{T}(x,s) & x\in T_1\\
        -d_{T}(x,s) & x\in T_2
        \end{cases}
        ~~~.
        \]
        It is clear that $h^{s}$ is Lipschitz, and that for every $x\in T_1,y\in T_{2}$,~  $|h^{s}(x)-h^{s}(y)|=d_{T}(x,y)$.
        For $i\in\{1,2\}$, invoke the induction hypothesis on $T_{i}$ with terminal set $K_i=T_{i}\cap K$,
creating embedding pair $f_{i}:T\rightarrow\ell_\infty^{a\cdot\log(|K_i|)}$ and $g_{i}:T\rightarrow\mathcal{T}_{i}$
which together satisfy requirements (1)-(3).
By padding with $0$-valued coordinates, we can assume that both $f_{1}$ and $f_{2}$ use exactly $a\cdot\log\frac{2}{3}k$ coordinates.
Moreover, by translation we can assume that $f_{1}(s)=f_{2}(s)=\vec{0}$ (note that there are no prioritized/terminal dimension guarantees here). Set
        $f_{12}$ to be the combined function of $f_{1},f_{2}$:
        \[
        f_{12}(x)=\begin{cases}
        f_{1}(x) & x\in T_1\\
        f_{2}(x) & x\in T_2
        \end{cases}~~~.
        \]
        We argue that the function $f_{12}$ is Lipschitz: For $x,y\in T_{i}$,
        $\|f_{12}(x)-f_{12}(y)\|_{\infty}=\|f_{i}(x)-f_{i}(y)\|_{\infty}\le d_{T_i}(x,y)=d_{T}(x,y)$.
On the other hand for $x\in T_1$ any $y\in T_{2}$, using the triangle inequality
        \begin{align*}
        \|f_{12}(x)-f_{12}(y)\|_{\infty} & \le\|f_{12}(x)-f_{12}(s)\|_{\infty}+\|f_{12}(s)-f_{12}(y)\|_{\infty}\\
        & \le d_{T_{1}}(x,s)+d_{T_{2}}(s,y)=d_{T}(x,s)+d_{T}(s,y)=d_{T}(x,y)~.
        \end{align*}
        Set $f_{12s}$ to be the concatenation of $f_{12}$ with $h^s$, and it is clear that $f_{12s}$ is Lipschitz as well.
This completes the description of the embedding into $\ell_\infty$.

For the embedding into the tree,
let $\mathcal{T}_{12}$ be composed of the trees $\mathcal{T}_{1}$ and $\mathcal{T}_{2}$ glued together in $g_{1}(s),g_{2}(s)$.
Similarly define
$g_{12}:T\rightarrow\mathcal{T}_{12}$ as follows
        \[
        g_{12}(x)=\begin{cases}
        g_{1}(x) & x\in T_{1}\\
        g_2(x) & x\in T_{2}
        \end{cases}~~~.
        \]
        Using the triangle inequality in the same manner as for $f_{12}$,
        it is clear that $g_{12}$ is Lipschitz.

        We argue that requirement (2) holds w.r.t.\ $f_{12s},g_{12}$. 
Indeed, for $u,v$ in $T_i$,
        \begin{align*}
       \max\left\{ \|f_{12s}(x)-f_{12s}(y)\|_{\infty},d_{\mathcal{T}_{12}}(g_{12}(x),g_{12}(y))\right\}
& \ge\max\left\{ \|f_{i}(x)-f_{i}(y)\|_{\infty},d_{T_{i}}(g_{i}(x),g_{i}(y))\right\} \\
        & =d_{T_{i}}(x,y)=d_{T}(x,y)
        \end{align*}
        On the other hand, for $v\in T_1,~u\in T_2$,
        \[
        \max\left\{ \|f_{12s}(v)-f_{12s}(u)\|_{\infty},d_{\mathcal{T}_{12}}(g_{12}(v),g_{12}(u))\right\} \ge|h^{s}(v)-h^{s}(u)|=d_{T}(v,u)~.
        \]

        However, requirement (3) does not immediately hold, as $\mathcal{T}_{12}$ contains two terminals $g_1(K_1),g_2(K_2)$.
Invoke the lemma for the case of $k=2$ to create
two embeddings $\hat{f}:\mathcal{T}_{12}\rightarrow \mathbb{R},~\hat{g}:\mathcal{T}_{12}\rightarrow \mathcal{T}$ that fulfill requirements $(1)-(3)$.
Set $f=f_{12s}\oplus \hat{f}(g_{12})$ to be the concatenation of $f_{12s}$ with $\hat{f}(g_{12})$ and $g=\hat{g}(g_{12})$ to be the
composition of $\hat{g}$ with $g_{12}$ ending in the tree $\mathcal{T}$.
        It is clear that both $f,g$ are Lipschitz, as the Lipschitz 
property is preserved under concatenation and composition, thus 
establishing (1). Moreover, $g$ maps all terminals to a single vertex.
        Requirement (2) also holds:
        \begin{align*}
        d_{T}(u,v) & =\max\left\{ \|f_{12s}(v)-f_{12s}(u)\|_{\infty},d_{\mathcal{T}_{12}}(g_{12}(v),g_{12}(u))\right\} \\
        & =\max\left\{ \|f_{12s}(v)-f_{12s}(u)\|_{\infty},\left|\hat{f}(g_{12}(v))-\hat{f}(g_{12}(u))\right|,
d_{\mathcal{T}}\left(\hat{g}(g_{12}(v)),\hat{g}(g_{12}(v))\right)\right\} \\
        & =\max\left\{ \|f(v)-f(u)\|_{\infty},d_{\mathcal{T}}(g(v),g(v))\right\}~.
        \end{align*}
        Finally, and recalling that $a=\frac{2}{\log(3/2)}$, the number of coordinates is bounded by
        \[
        a\cdot\log\frac{2}{3}k+1+1=a\cdot\log k+\left(a\cdot\log\frac{2}{3}+2\right)=a\cdot\log k~.
        \]
        The lemma now follows.
\end{proof}

\subsection{Prioritized Embedding of Trees into $\ell_\infty$}
\begin{theorem}\label{thm:TreePriorLinfty}
	Given a weighted tree $T=(V,K,w)$ and a priority ordering $\pi$ over $V$,
	there is an isometric embedding $f$ into $\ell_{\infty}$ with prioritized
	dimension $O(\log j)$.
\end{theorem}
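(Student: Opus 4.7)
My plan is to iterate \Cref{lem:terminalTree} on doubly-exponentially growing prefixes of the priority order, so that each vertex $v_j$ becomes a terminal by level $\lceil\log\log j\rceil$ and is then folded into a canonical ``center'' vertex that contributes $\vec{0}$ to every subsequent coordinate block.

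For $i=0,1,\ldots,\lceil\log\log n\rceil$, I would define $L_i=\{v_1,\ldots,v_{\min(n,2^{2^i})}\}$, and build trees $T_i$ with maps $\phi_i:T\to T_i$ iteratively: start from $T_0=T$, $\phi_0=\mathrm{id}$, and at level $i$ apply \Cref{lem:terminalTree} to $T_i$ with terminal set $K_i=\phi_i(L_i)$. This yields $f_i:T_i\to\ell_\infty^{O(\log|K_i|)}=\ell_\infty^{O(2^i)}$ and $g_i:T_i\to T_{i+1}$ collapsing $K_i$ to a single vertex $c_{i+1}$; then set $\phi_{i+1}=g_i\circ\phi_i$ and translate $f_i$ so that $f_i(\phi_i(v_1))=\vec{0}$. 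The final embedding $F:T\to\ell_\infty$ is the concatenation of the blocks $f_i\circ\phi_i$ over all levels.

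For dimension, I would observe that whenever $j\le 2^{2^i}$ we have $v_j\in L_i$ and hence $\phi_{i+1}(v_j)=c_{i+1}$; since $v_1\in L_{i'}$ for every later level $i'>i$, the vertex $c_{i'}$ is itself a terminal of $T_{i'}$, so by induction $\phi_{i'}(v_j)=c_{i'}$ and by the chosen translation $f_{i'}\circ\phi_{i'}(v_j)=\vec{0}$. Hence $v_j$ is non-zero only in the coordinates of levels $i\le\lceil\log\log j\rceil$, totalling $\sum_{i=0}^{\lceil\log\log j\rceil}O(2^i)=O(\log j)$.

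For isometry, I would maintain the invariant that if no $f_{i'}$ with $i'<i$ has yet satisfied a pair $(x,y)$, then $d_{T_i}(\phi_i(x),\phi_i(y))=d_T(x,y)$, which follows because property (2) of \Cref{lem:terminalTree} forces $g_{i'-1}$ to preserve any pair that $f_{i'-1}$ does not. Then for distinct $x=v_j$, $y=v_{j'}$ with $i^*=\lceil\log\log\max(j,j')\rceil$, both $\phi_{i^*}(x),\phi_{i^*}(y)\in K_{i^*}$ so $g_{i^*}$ identifies them and their $T_{i^*+1}$-distance is $0$; property (2) combined with the invariant then forces $f_{i^*}$ to realize $d_T(x,y)$ exactly, and together with the Lipschitzness of each block this gives an isometric embedding. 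The main obstacle will be calibrating the schedule correctly: the doubly-exponential growth of $|L_i|$ is precisely what balances the $O(2^i)$ new coordinates per level against the $\lceil\log\log j\rceil$ levels needed to promote $v_j$ to a terminal, and the small but crucial trick of forcing $v_1\in L_i$ for every $i$ makes $c_i$ a consistent center so that the translation trivializes later contributions of earlier-priority vertices.
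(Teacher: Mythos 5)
Your proposal is correct and follows essentially the same route as the paper's proof: iterating \Cref{lem:terminalTree} on the nested doubly-exponential prefixes, collapsing each prefix to a single vertex, translating each new block so that the collapsed vertex maps to $\vec{0}$, and using property (2) of the lemma to maintain the invariant that un-satisfied pairs keep their exact distance in the current tree. Your conditional invariant is just a restatement of the paper's max-identity (equation (\ref{eq:maxNormsTree})), and the translation via $v_1$ coincides with the paper's convention since $v_1$ lies in every prefix.
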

\begin{proof}
	\sloppy Let $\pi=\left\{ x_{1},x_{2},\dots,x_{n}\right\} $ be a 
priority order. Set $S_{i}=\left\{ x_{i} : i\le2^{2^{i}}\right\} $ for 
$1\le i\le\left\lceil \log\log n\right\rceil $.
	Using \Cref{lem:terminalTree}, w.r.t.\ terminal set $S_{1}$
	construct embeddings $f_{1}:T\rightarrow\ell_{\infty}^{O\left(\log\left|S_{1}\right|\right)}$ and $g_{1}:T\rightarrow T_{1}$. It holds that $g_1(S_1)$ is a single vertex in $T_1$, and for every $u,v\in V$, $d_{T}(u,v)=\max\left\{ \left\Vert f_{1}(u)-f_{1}(v)\right\Vert _{\infty},d_{T_{1}}\left(g_{1}(u),g_{1}(v)\right)\right\} $.
	Next, using \Cref{lem:terminalTree} again, w.r.t.\ terminal set $g_{1}\left(S_{2}\right)$, construct embeddings  
$f_{2}:g_{1}\left(T\right)\rightarrow\ell_{\infty}^{O\left(\log\left|S_{2}\right|\right)}$ and $g_{2}:g_{1}\left(T\right)\rightarrow T_{2}$.
	By translation, we can assume that $f_{2}(g_{1}(S_{1}))=\vec{0}$. Furthermore, $g_2(g_1(S_2))$ is a single vertex in $T_2$. It also holds that,
	\[
	d_{T}(u,v)=\max\big\{ \left\Vert f_{1}(u)-f_{1}(v)\right\Vert_{\infty},\left\Vert f_{2}(g_{1}(u))-f_{2}(g_{1}(v))\right\Vert _{\infty},d_{T_{2}}\left(g_{2}(g_{1}(u)),g_{2}(g_{1}(v))\right)\big\}. 
	\]
	More generally, in the $i$-th step, we invoke 
\Cref{lem:terminalTree} on $T_{i-1}$ (w.r.t.\ terminal set 
$g_{i-1}(g_{i-2}(\cdots(g_1(S_i))))$ ) to construct tree $T_i$ and embeddings $f_i,g_i$.
	By induction, we constructed trees $T_{1},\dots,T_{i}$ and embeddings
	$f_1:T\rightarrow\ell_\infty^{O(\log|S_1|)},\dots,f_i:T_{i-1}\rightarrow\ell_\infty^{O(\log|S_i|)}$,~ $g_1:T\rightarrow T_1,\dots,g_i:T_{i-1}\rightarrow T_i$ such that for all $q\in[1,i]$, $g_q(g_{q-1}(\dots(g_1(S_q))))$ is a single vertex in $T_q$ and $f_q(g_{q-1}(\dots(g_1(S_{q-1}))))=\{\vec{0}\}$. Furthermore
	\begin{eqnarray}\label{eq:maxNormsTree}
	d_{T}(u,v) & =&\max\big\{\left\Vert f_{1}(u)-f_{1}(v)\right\Vert _{\infty},\dots,\left\Vert f_{i}(g_{i-1}(\dots(g_{1}(u))))-f_{i}(g_{i-1}(\dots(g_{1}(u))))\right\Vert _{\infty}\\
	& &\qquad~~,d_{T_{i}}\big(g_{i}(g_{i-1}(\dots(g_{1}(u))))\ ,\ g_{i}(g_{i-1}(\dots(g_{1}(u))))\big)\big\}~.\nonumber
	\end{eqnarray}
	Denote $\alpha=\lceil\log\log n\rceil$. After $\alpha$ steps we get functions and trees as above. 
	Set $f=f_{1}\oplus\left(f_{2}\circ g_{1}\right)\oplus\left(f_{3}\circ g_{2}\circ g_{1}\right)\oplus\cdots\oplus\left(f_{\alpha}\circ g_{\alpha-1}\circ\cdots\circ g_{1}\right):T\rightarrow\ell_{\infty}$.
	We argue that $f$ is an isomorphic embedding with prioritized dimension $O\left(\log j\right)$ as promised.
	Note that all vertices of $V$ belong to $S_\alpha$ and hence  mapped by $g_\alpha(g_{\alpha-1}(\cdots(g_1)))$ to the 
same vertex. Thus for every $u,v\in V$, $d_{T_\alpha}\big(g_\alpha(g_{\alpha-1}(\cdots(g_1(u)))),g_\alpha(g_{\alpha-1}(\cdots(g_1(v))))\big)=0$. By equation (\ref{eq:maxNormsTree}) we get
	\begin{align*}
	d_{T}(u,v) & =\max\left\{ \left\Vert f_{1}(u)-f_{1}(v)\right\Vert _{\infty},\dots,\left\Vert f_{\alpha}(g_{\alpha-1}(\dots(g_{1}(u))))-f_{\alpha}(g_{\alpha-1}(\dots(g_{1}(u))))\right\Vert _{\infty}\right\} \\
	& =\left\Vert f(u)-f(v)\right\Vert _{\infty}.
	\end{align*}
	Finally we argue that $f$ has prioritized dimension $O(\log j)$. Consider $x_{j}\in S_{\left\lceil \log\log j\right\rceil }$.
	For every $i>\left\lceil \log\log j\right\rceil $ it holds that
	$f_{i}\left(g_{i-1}\left(g_{i-2}\left(\cdots\left(g_{1}(x_{j})\right)\right)\right)\right)=\vec{0}$
	(as $x_{j}\in S_{i-1}$). Therefore $x_{j}$ might be non-zero only in the first
	\[
	\sum_{i=1}^{\left\lceil \log\log j\right\rceil }O\left(\log|S_{i}|\right)=O\left(\sum_{i=1}^{\left\lceil \log\log j\right\rceil }2^{i}\right)=O\left(2^{\left\lceil \log\log j\right\rceil +1}\right)=O\left(\log j\right)
	\]
	coordinates.	
\end{proof}

\section{Planar Graphs}
The theorem below demonstrates that any isometric embedding of the cycle graph $C_{2n}$ into $\ell_\infty$ requires dimension $n$. 
Furthermore, no prioritized dimension is possible for isometric embeddings of the cycle graph. 
The cycle graph is an interesting example as it is both planar and has treewidth $2$. 
The non-prioritized lower bound is a special case of a theorem proved in \cite{LLR95}, which applies to general norms. 
Nonetheless, the proof provided here is much simpler.
\begin{theorem}\label{thm:PlanarEmbeddingLB}
	For every $n\in N$, every isometric embedding of $C_{2n}$ (the unweighted cycle graph) into $\ell_\infty$ requires at least $n$ coordinates.
	Furthermore, there is no function $\alpha:\N\rightarrow\N$ for which the family of cycle graphs $\{C_n\}_{n\in \N}$ can be
	embedded into $\ell_{\infty}$ with prioritized dimension $\alpha$.
\end{theorem}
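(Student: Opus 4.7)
The plan is to exploit a rigidity property of a single $1$-Lipschitz coordinate function on $C_{2n}$ that saturates the distance of an antipodal pair. After rescaling, I may assume every coordinate of the embedding is $1$-Lipschitz, and each of the $n$ antipodal pairs $\{j, j+n\}$ must be \emph{satisfied} by some coordinate $f_k$ with $|f_k(j) - f_k(j+n)| = n$.

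The core structural claim is that if a $1$-Lipschitz $f : C_{2n} \to \R$ satisfies some antipodal pair, say $\{0, n\}$, then after translating so that $f(0) = 0$ and $f(n) = n$, the function $f$ is completely determined. Indeed, each of the two arcs connecting $0$ to $n$ is a shortest path of length exactly $n$, so Lipschitzness forces $f$ to increase by exactly $1$ at every step along each arc, yielding $f(j) = j$ for $0 \le j \le n$ and $f(j) = 2n - j$ for $n \le j \le 2n$. Then for any other antipodal pair $\{j, j+n\}$ with $1 \le j \le n-1$, one computes $|f(j) - f(j+n)| = |2j - n| < n$, so no other antipodal pair is satisfied by $f$. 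Hence each coordinate satisfies at most one antipodal pair, and an isometric embedding requires at least $n$ coordinates.

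For the ``furthermore'' statement, I strengthen the rigidity observation: in any coordinate that saturates an antipodal pair, the two cycle-adjacent vertices $0$ and $1$ lie on one of the two shortest arcs between that pair, and along this arc $f$ is strictly monotone with unit step, so $|f(0) - f(1)| = 1$; in particular, at least one of $f(0), f(1)$ is nonzero. Now, given any candidate function $\alpha : \N \to \N$, choose $n > \alpha(1) + \alpha(2)$ and consider $C_{2n}$ with a priority order assigning priority $1$ to vertex $0$ and priority $2$ to vertex $1$. By the first part, any isometric embedding needs $n$ distinct satisfying coordinates, and in each of them at least one of vertices $0, 1$ is nonzero. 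Summing over the two vertices, the prioritized dimension bound gives $\alpha(1) + \alpha(2) \ge n$, a contradiction.

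The main technical step is pinning down the rigidity, in particular verifying that the two shortest arcs simultaneously force $f$ on \emph{all} of $C_{2n}$ (so that neighbors on the cycle have images differing by exactly $1$). Once this is done, both the $\Omega(n)$ dimension lower bound and the impossibility of prioritization follow by short computations, and the use of two high-priority adjacent vertices is essential, since a single priority vertex admits the translation $f(v_1) = 0$ that is compatible with any single pair being satisfied.
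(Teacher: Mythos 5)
Your proof is correct, and its skeleton matches the paper's: count the $n$ antipodal pairs, show each coordinate of a non-expansive embedding can realize distance $n$ on at most one of them, and for the prioritized part place two \emph{adjacent} vertices at priorities $1$ and $2$ and argue that any coordinate realizing an antipodal distance must separate them. Where you differ is in the key lemma. The paper gets both facts from two quick triangle-inequality computations: for the first part, if two antipodal pairs were both satisfied in one coordinate, some cross pair would be stretched to image distance $\ge n$ although its true distance is $<n$; for the second part, in a coordinate where both $v_n,v_{n+1}$ map to $0$, the path estimate $|f(v_i)-f(v_{n+i})|\le (n-i)+0+(i-1)=n-1$ shows no antipodal pair is satisfied. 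You instead prove a rigidity statement -- a $1$-Lipschitz coordinate saturating an antipodal pair must be the ``tent'' map with unit steps along both arcs -- and read off both conclusions from it, including the slightly stronger fact $|f(0)-f(1)|=1$ (the paper only needs the contrapositive ``both zero $\Rightarrow$ no pair satisfied''). Your rigidity route is a bit heavier but gives more structural information; the paper's is shorter. Two small remarks: no rescaling is needed, since for an isometric embedding every coordinate is automatically $1$-Lipschitz; and your appeal to ``satisfying coordinates'' existing is legitimate here because the prioritized-dimension hypothesis confines the finitely many images to finitely many coordinates, so the sup defining $\|f(x)-f(y)\|_\infty$ is attained -- worth a sentence, though the paper elides the same point.
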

\begin{proof}
	Denote the vertices of $C_{2n}$ by $V=\{v_0,v_1,\dots,v_{2n-1}\}$. 
The maximum distance is $n$, and it is realized on all the antipodal pairs $\{v_0,v_n\},\{v_{1},v_{n+1}\},\dots,\{v_{n-1},v_{2n-1}\}$. 
We argue that in a single embedding into the line $\R$, at most one antipodal pair can be satisfied, that is realize distance $n$. 
Indeed, suppose by way of contradiction that there is a non-expansive function $f:C_{2n}\rightarrow\R$ such that $|f(v_j)-f(v_{n+j})|=|f(v_i)-f(v_{n+i})|=n$ for $i\ne j$, 
then necessarily \sloppy $\max\left\{|f(v_i)-f(v_{j})|,|f(v_i)-f(v_{n+j})|,|f(v_{n+i})-f(v_{j})|,|f(v_{n+i})-f(v_{n+j})|\right\} \ge n$, a contradiction.
	As there are $n$ antipodal pairs, every isometric embedding requires at least $n$ coordinates. 
	
	For the second part, for sufficiently large $n$ set a priority ordering $\pi$ of $C_{2n}$ where $v_n,v_{n+1}$ have priorities $1$ and $2$ respectively. 
Consider a single Lipschitz coordinate $f:C_{2n}\rightarrow\R$ sending both $v_n,v_{n+1}$ to $0$. By the triangle inequality, for every antipodal pair $\{v_i,v_{n+i}\}$, it holds that
	\begin{align*}
	|f(v_{i})-f(v_{n+i})| & \le|f(v_{i})-f(v_{n})|+|f(v_{n})-f(v_{n+1})|+|f(v_{n+1})-f(v_{n+i})|\\
	& \le(n-i)+0+(i-1)=n-1<n~.
	\end{align*}
	Thus no antipodal pair could be satisfied. We conclude that $f(v_{i})\ne f(v_{n+i})$ in at least $n$ coordinates. In particular for $\alpha(2)<n$, priority distortion $\alpha$ is impossible.
\end{proof}

\begin{theorem}\label{thm:PlanarPrioritzedLB}
	Every isometric prioritized labeling scheme for planar graphs must have prioritized label size of at least $\Omega(j)$ (in bits).
	This lower bound holds even for unweighted planar graphs.
\end{theorem}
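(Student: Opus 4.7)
The plan is to establish a counting lower bound in the style of \cite{GPPR04}, tailored to the prioritized setting and carried out in the unweighted regime via edge subdivision. The key is to show that for each sufficiently large $k$, there is a family of unweighted planar graphs whose first $k$ priority vertices must collectively carry $\Omega(k^2)$ bits of label information.

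For each large integer $k$, I would construct a family $\mathcal{F}_k$ of unweighted planar graphs on $n=\Theta(k^2)$ vertices, all sharing a fixed set $T=\{t_1,\dots,t_k\}$ of $k$ terminals, such that $|\mathcal{F}_k|\geq 2^{\Omega(k^2)}$ and distinct graphs in $\mathcal{F}_k$ induce distinct distance matrices on $T$. The template is the $k\times k$ grid with terminals placed along its boundary (for example alternating on two opposite sides) and each internal edge assigned weight $1$ or $2$. This yields $2^{\Theta(k^2)}$ weight assignments; with a careful terminal placement, each weight-choice edge individually affects the distance of at least one specific terminal pair along its unique shortest path, so distinct assignments give distinct terminal metrics. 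Subdividing each weight-$2$ edge (inserting one extra vertex) then converts the construction to an unweighted planar graph on $\Theta(k^2)$ vertices that preserves planarity and all shortest-path distances.

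Given any prioritized labeling scheme with label size $\beta:\N\to\N$ (in bits), I fix the priority ordering so that $x_j=t_j$ for $j\leq k$, placing the remaining subdivision vertices at priorities above $k$. Since the graph-oblivious decoder $\mathcal{A}$ recovers each $d_G(t_i,t_j)$ from $(l(t_i),l(t_j))$ alone, the $k$-tuple $(l(t_1),\dots,l(t_k))$ determines the terminal metric of $G$, and therefore takes at least $|\mathcal{F}_k|\geq 2^{\Omega(k^2)}$ distinct values across the family. A bit-count then gives
\[
\sum_{j=1}^{k}\beta(j)\;\geq\;\log_{2}|\mathcal{F}_k|\;=\;\Omega(k^2).
\]
Suppose, for contradiction, that $\beta(j)=o(j)$. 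Then for every $\epsilon>0$ there is a constant $j_0$ with $\beta(j)\leq \epsilon j$ for all $j\geq j_0$, giving $\sum_{j=1}^{k}\beta(j)\leq O_{\epsilon}(1)+\epsilon k^2/2$. For small enough $\epsilon$ and large enough $k$ this contradicts the displayed inequality, so $\beta(j)=\Omega(j)$ bits, as claimed.

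The main obstacle is the construction step: verifying that the subdivided grid really yields $2^{\Omega(k^2)}$ distinct terminal metrics. The number of weight assignments is clearly $2^{\Theta(k^2)}$, but one must argue that a $2^{\Omega(k^2)}$-size subfamily is mapped injectively to terminal distance matrices. This boils down to showing that each internal edge lies on the unique shortest path of at least one designated terminal pair, so that flipping its weight from $1$ to $2$ changes that pair's distance; with the grid terminals placed alternately on two opposite sides, this can be verified edge by edge. Once the injectivity is established, the counting argument above is routine and the prioritized $\Omega(j)$ lower bound follows immediately, in both the weighted and unweighted settings.
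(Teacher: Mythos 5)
Your counting and prioritization framework is the same as the paper's (the paper also fixes a family of planar graphs, puts a distinguished vertex set first in the priority order, and argues that the concatenation of the first $k$ labels must encode $\Omega(k^2)$ bits, whence $o(j)$ prioritized label size is impossible). The genuine gap is exactly the step you flag as ``the main obstacle'': the existence of a family of \emph{unweighted} planar graphs on $\Theta(k^2)$ vertices whose $k$ boundary terminals realize $2^{\Omega(k^2)}$ distinct distance matrices. This is not a routine edge-by-edge verification; if it were true it would immediately give a worst-case $\Omega(\sqrt{n})$-bit lower bound for exact distance labeling of unweighted planar graphs, improving the known $\Omega(n^{1/3})$ bound of Gavoille et al.\ --- a gap the paper explicitly notes is an open question. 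Concretely, your grid template fails: in a $\{1,2\}$-weighted (or subdivided, unweighted) grid, boundary-to-boundary shortest paths are massively non-unique (all monotone staircase paths have equal length), so raising a single interior edge from weight $1$ to $2$ is typically absorbed by rerouting and changes no terminal distance at all; most interior bits are simply not recoverable from the terminal metric. Moreover, even granting that each edge flip changes some designated pair's distance, that is single-flip sensitivity, not injectivity of the map from $2^{\Theta(k^2)}$ assignments to terminal metrics; you would need a genuine decoding argument showing the terminal distances determine the whole assignment.

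The paper sidesteps all of this by using the unweighted planar graph of Gavoille et al.\ as a black box: that graph has $O(n^3)$ vertices, only $O(n)$ of which lie on the outer face, and it is \emph{proved} there that the pairwise outer-face distances determine an arbitrary subset $A$ of an edge set $E_2$ with $|E_2|=\Omega(n^2)$. Placing the $O(n)$ outer-face vertices first in the priority order then forces $\sum_{j\le O(n)}\beta(j)=\Omega(n^2)$ bits, which already yields the prioritized $\Omega(j)$ bound --- the total vertex count $\Theta(k^3)$ (rather than your hoped-for $\Theta(k^2)$) is irrelevant for the prioritized statement, since the bound only concerns the labels of the first $k$ points. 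To repair your write-up, replace your grid family with this known construction (or supply an actual decoding proof for a denser unweighted family, which would be a much stronger, currently open, result).
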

\begin{proof}
	Recall that \cite{GPPR04} proved an $\Omega(n^{\frac13})$ lower bound on the label size for exact distance labeling for unweighted planar graphs.
	We will use the same example graph $G$ from \cite{GPPR04}. We refer to \cite{GPPR04} for the description of $G$; here it suffices to describe its relevant properties.
	Given a parameter $n$, $G=(V,E)$ is an unweighted planar graph with $O(n^3)$ vertices, among which $O(n)$ lie on the outer face, denoted $\tilde{V}\subset V$. 
Set $E=E_1\cup E_2$, where $|E_2|=\Omega(n^2)$. 
For every subset $A\subset E_2$, denote by $G_{A}=(V,E_1\cup A)$ the graph $G$ wherein the edge-set $E_2\setminus A$ has been removed 
(equivalently, where only the edge-set $E_1\cup A$ is retained).
	\cite{GPPR04} showed that given all pairwise distances between the outerface vertices  $\{d_{G_A}(v,u)\mid v,u\in \tilde{V}\}$, one can recover the set $A$.
	Note that $\log  2^{|E_2|}=\Omega(n^2)$ bits are required to encode the set $A$.
	
	Suppose by way of contradiction that there is an exact prioritized labeling scheme with $o(j)$ labels size (in bits). 
Given a graph $G_{A}$, we define a priority ordering where the vertices of $\tilde{V}$ occupy the first $|\tilde{V}|$ places. 
Given all the labels of $\tilde{V}$, we can encode the set $A$ by simply concatenating all the labels. 
Therefore the sum of the lengths of the labels of $\tilde{V}$ must be $\Omega(n^2)$. 
However, by our assumption, the sum of their lengths is only $\sum_{j=1}^{|\tilde{V}|}o(j)=o(n^2)$, a contradiction 
(for sufficiently large $n$).	
\end{proof}

\section{Conclusions and Open Questions}\label{sec:OpenQues}
We uncover a wide spectrum of settings and bounds that answer our questions.
For \Cref{q1:LabVsEmb},
in the simplest case of trees, labeling and embeddings have similar behavior, and both admit prioritization with similar bounds.
For the least restricted case of general graphs/metrics,
we find similarly that labelings and embeddings exhibit similar behavior across various distortion parameters.
However between these two extremes, for $\ell_p$ spaces, planar graphs and treewidth $k$ graphs, we see significant separations between labelings and embeddings.

For \Cref{q2:PriorVs}, we show that labelings admit far superior prioritized versions than their embedding counterparts in all settings other than trees,
and most notably for general graphs and for planar/bounded-treewidth graphs,
where no prioritized dimension is possible. In $\ell_p$ spaces, while we did not rule out the possibility of prioritized dimension, we demonstrate a surprising exponential gap between labelings and embeddings (also in the dependence on $\eps$).

For \Cref{q3:priorVersion} we saw that labeling schemes have prioritized versions, in all cases other than planar graphs where instead of the desired $O(\sqrt{j})$ label size we show that $\Theta(j)$ is surprisingly necessary.
For embeddings into $\ell_\infty$ we showed that for larger distortion some prioritized dimension is possible, even though it is much worse than its labeling counterpart. 

Our results leave a few open questions that may be of independent interest:
\begin{enumerate}
	\item How many coordinates are required in order to embed planar graphs -- or even treewidth $2$ graphs -- into $\ell_\infty$ with distortion $1+\eps$?
	\item What is the required label size for $1+\eps$ distance labeling for $\ell_p$ spaces, for $p>2$?
	\item Is it possible to embed $\ell_p$ spaces ($p \in [1,\infty]$) into $\ell_\infty$ with distortion $1+\eps$ and some prioritized dimension? 
\Cref{thm:L2toLinftyPriorLB}  provided a $j^{\Omega(\frac1\eps)}$ lower bound, but did not rule out this possibility. 
The same question applies when considering constant distortion.
	\item All results on embedding of general graphs into $\ell_\infty$ with both prioritized distortion and dimension 
(our \Cref{thm:GeneralGraphPriorityEmbedding}, Theorem 15 in \cite{EFN18}, 
and Theorems 2,3 in \cite{EN20}) feature prioritized {\em contractive} distortion. What is possible w.r.t.\ 
classic prioritzed distortion (see footnote (\ref{foot:PrioritzedContractive}))? 
\end{enumerate}

{\small
\bibliographystyle{alphaurlinit}
\bibliography{Bib}
}
\end{document}